\def\sec#1{\underline{\textbf{#1}}}
\def\R{I\!\!R}
\def\I{I\!\!I}
\renewcommand{\i}{\mathrm{i}}
\newcommand{\eqd}{\stackrel{d}{=}}
\renewcommand{\P}{{\mathbb P}}
\newcommand{\N}{{\mathbb N}}
\newcommand{\T}{\mathcal{T}}
\newcommand{\E}{{\mathbb E}}
\renewcommand{\L}{{\EuScript L}}
\newcommand{\ii}{\i}
\newcommand{\eps}{\varepsilon}
\newcommand{\Var}{\operatorname{Var}}
\renewcommand{\Re}{\mathrm{Re}}
\renewcommand{\Im}{\mathrm{Im}}
\newcommand{\F}{\mathcal{F}}
\newcommand{\M}{\mathcal{M}}
\theoremstyle{plain}
\newtheorem{thm}{Theorem}[section]
\newtheorem{rem}[thm]{Remark}
\newtheorem{prop}[thm]{Proposition}
\theoremstyle{definition}
\newtheorem{defi}[thm]{Definiton}
\newtheorem{ex}[thm]{Example}
\newcommand{\argmin}{\operatornamewithlimits{arg\,min}}
\newcommand{\C}{{\mathbb C}}
\newcommand{\vsp}{\vspace{0.5cm}}
\newcommand{\K}{\mathcal{K}}
\newcommand\erf{\operatorname{erf}}
\renewcommand{\c}{\mu}
\renewcommand{\a}{\lambda}
\newcommand{\uc}{u^{\circ}}
\renewcommand{\C}{R}
\newcommand{\mE}{\mathcal{E}}
\newcommand{\mg}{\mathcal{G}}
\begin{document}
\begin{frontmatter}
\title{Statistical inference for generalized  Ornstein-Uhlenbeck processes\thanksref{T1}}
\runtitle{Statistical inference for GOU}

\begin{aug}
\author{\fnms{Denis} \snm{Belomestny}\ead[label=e1]{denis.belomestny@uni-due.de}}
\address{University of Duisburg-Essen\\
Thea-Leymann-Str. 9, 45127 Essen,  Germany\\
and \\
National Research University Higher School of Economics \\ Shabolovka, 26, Moscow,  119049 Russia\\
\printead{e1}}

\author{\fnms{Vladimir} \snm{Panov}\ead[label=e2]{vpanov@hse.ru}}
\address{National Research University Higher School of Economics \\ Shabolovka, 26, Moscow,  119049 Russia.\\
\printead{e2}}

\thankstext{T1}{
The financial support from the Government of the Russian Federation within the framework of the implementation of the 5-100 Programme Roadmap of the National Research University  Higher School of Economics is acknowledged. 
}
\runauthor{D.Belomestny and V.Panov}

\end{aug}

\begin{abstract}
In this paper, we consider the problem of statistical inference for generalized Ornstein-Uhlenbeck processes of the type 
\[
X_{t} = e^{-\xi_{t}} \left( X_{0} + \int_{0}^{t} e^{\xi_{u-}} d u \right), 
\]
where    \(\xi_s\) is a  L{\'e}vy process. Our primal goal is to estimate the characteristics of the L\'evy process \(\xi\) from the low-frequency  observations of the process \(X\). We present a novel approach towards estimating the L{\'e}vy triplet of \(\xi,\) which is based on the Mellin transform technique. It is shown that the resulting estimates attain optimal minimax convergence rates.  The suggested  algorithms are illustrated by numerical simulations. 
\end{abstract}

\begin{keyword}
\kwd{L{\'e}vy process}
\kwd{exponential functional}
\kwd{generalized Ornstein-Uhlenbeck process}
\kwd{Mellin transform}
\end{keyword}
\tableofcontents
\end{frontmatter}

\section{Introduction}
Let \((\xi_t)_{t\geq 0}\) be a L\'evy process with a L{\'e}vy triplet \(\left( \mu, \sigma^{2}, \nu \right)\). The main object of our study is the so-called generalized Ornstein-Uhlenbeck (GOU) process defined as
\begin{eqnarray}
\label{GOU}
X_{t} = e^{-\xi_{t}} \left(  
	X_{0} + \int_{0}^{t} e^{\xi_{u-}} d u
\right),\quad t\geq 0.
\end{eqnarray}
The GOU processes have recently got much attention in the literature. A comprehensive study of the GOU processes and an extended list of references  can be found in the theses of Behme \cite{Behme}, where, in particular, it is shown that \(X_{t}\) satisfies the following SDE: 
\begin{eqnarray*}
d X_{t}= X_{t-} dU_{t} +dt, \quad \mbox{where} \quad 
U_{t}:= -\xi_{t} +\sum_{0<s\leq t} \left( 
	e^{-\Delta \xi_{s}} - 1 + \Delta \xi_{s}
\right) +\frac{1}{2} \sigma^{2} t.
\end{eqnarray*}
The popularity of GOU processes is related to the fact they appear to be useful in several applications.
For instance, the process \eqref{GOU}  determines the volatility process in the COGARCH (COntinious Generalized AutoRegressive Conditionally Heteroscedastic) model  introduced in Kl{\"u}ppelberg et al. \cite{cogarch}.
One important result from the theory of GOU processes is that, under some conditions, the process \eqref{GOU} is 
stationary with invariant stationary distribution given by the distribution of the following exponential functional of \(\xi:\)
\begin{eqnarray}
\label{Ainfty}
A_{\infty}:= \int_{0}^{\infty} e^{-\xi_{t}} \; dt.
\end{eqnarray}  
In fact, the properties of the functional \eqref{Ainfty} have been widely studied in the literature and we refer to the survey by Bertoin and Yor \cite{BertoinYor} for a theoretical background of the exponential functionals. In particular, it is known that the Mellin transform of the density \(\pi\) of exponential functional, 
\[\M(z):= \E \left[A_{\infty}^{z-1}\right]=\int_{0}^{\infty}x^{z-1}\pi(x)\, dx,\]  
satisfies the following recursive formula
\begin{eqnarray}
\label{momenty}
	\M(z)
=
\frac{
	\phi(z)
}{
	z
}\;
\M(z+1),
\end{eqnarray}
where \(\phi(z)\) is a Laplace exponent of the process \(\xi\), i.e., 
\(
	\phi(z) := - \log \E \left[
		e^{ - z \xi_{1}}
	\right],
\)
and complex \(z\) is taken from the strip
 \begin{eqnarray}
 \label{Ups}
 \Upsilon:= \Bigl\{z \in \C: \;  0<\Re(z)  < \theta \Bigr\} \quad \mbox{with}\quad  \theta:=\sup\left\{x \geq 0: \E[ e^{-x \xi_{1}}] \leq 1\right\}.
 \end{eqnarray}
The recursive  formula \eqref{momenty} first appeared  for real \(z\) in the paper by Maulik and Zwart \cite{MZ}. The validity of \eqref{momenty} for complex \(z\) was recently  shown by Kuznetsov, Pardo and Savov \cite{Kuznets}. If \(\xi_{t}\) is a subordinator,  the parameter \(\theta\) is equal to infinity. Let us note that 
the functional \(A_{\infty}\) appeared in such application areas as finance  (see, e.g. the monograph by Yor \cite{Yor}),    carousel systems (see Litvak and Adan \cite{LitvakAdan}, Litvak and van Zwet \cite{LitvakZwet}),  self-similar fragmentations (see Bertoin and Yor \cite{BertoinYor}), and information transmission problems (especially TCP/IP protocol, see Guillemin, Robert and Zwart \cite{Guill}).   For the detailed discussion on the physical interpretations, we refer to Comtet, Monthus and Yor \cite{CMY} and the dissertation by  Monthus \cite{Monthus}.
\par
In this paper, we mainly focus on the case when  \(\xi\) is a subordinator with finite L{\'e}vy measure. In terms of the L{\'e}vy triplet \((\mu,\sigma^2,\nu)\), this means that \(\mu>0\), \(\sigma=0\), \(\nu(\R_{-})=0\) and moreover \(\a:=\nu(\R_{+})<\infty\).  Suppose that the process \eqref{GOU} is observed at equidistant time points \(0=t_{0} < t_{1} < \ldots < t_{n}\). Since under some mild assumptions  the process is stationary and the invariant distribution is given by the distribution of the exponential functional \(A_{\infty}\) (see Fasen, \cite{Fasen}), we assume that \(X_{t_{0}}, \ldots , X_{t_{n}}\) are also distributed as  \(A_{\infty}\).  Our main goal  is statistical inference on the L{\'e}vy triplet \((\mu,\sigma^2,\nu)\) based on the observations  \(X_{t_{0}},  \ldots , X_{t_{n}}\).   More precisely, we will pursue the following two aims: (1) estimation of the drift term \(\c\)  and the intensity parameter \(\a\); (2)  estimation of the L{\'e}vy measure \(\nu\). 
\par
To the best of our knowledge, the statistical inference for GOU processes of the form \eqref{GOU} from their low-frequency observations has not been yet studied in  the literature. In fact the resulting statistical problem is quite challenging and needs careful treatment. Indeed, the only connection between the stationary distribution of a GOU process, which can be estimated from the data, and the parameters of the underlying L\'evy process is given by the recurrent relation \eqref{momenty} which is rather implicit.  The main idea of our procedure for estimating   the parameters of the process \(\xi\)  can be described as follows. First, by making use of \eqref{momenty}, we estimate the Laplace exponent \(\phi(z)\) at the points \(z=u^{\circ}+\ii v \in \Upsilon\), where \(u^{\circ}>0\) is fixed and \(v\) varies on the equidistant grid between \(\eps V_{n}\) and \(V_{n}\) (with \(\eps>0\) and \(V_{n} \to \infty\) as \(n \to \infty\)) Afterwards, we use the representation  
\begin{eqnarray}
	\label{estproc1}
	\phi(\uc+\ii v) &=& \a+ \c \left( \uc+\ii v \right) -  \F[\bar\nu](-v) , \qquad v \in \R,
\end{eqnarray}
where  \(\bar\nu(dx):= e^{-\uc x} \nu(dx)\), and \(\F[\bar\nu] (v)\) stands for the Fourier transform of the measure \(\bar\nu\), i.e., \( \F[\bar\nu](v) := \int_{\R_{+}} e^{\ii v x} \bar\nu(dx).\) Since \(\F[\bar\nu] (v) \to 0\) as \(v \to \infty\) by the Riemann-Lebesgue lemma, upon taking  real and imaginary parts of the left and right hand sides of \eqref{estproc1}, we are able to consequently estimate the parameters \(\c\) and \(\a\). With no doubt, the second aim, a complete recovering of the L{\'e}vy measure \(\nu,\) is the most difficult task. Since the estimates of the parameters \(\c\) and \(\a\) are already obtained, we can estimate by \eqref{estproc1} the Fourier transform \(\F[\bar\nu] (v)\) of \(\bar\nu \) for \(v\)  from  \([-V_{n}, V_{n}]\).   The last step of this procedure, the estimation of the L{\'e}vy measure \(\nu\),  is based on the regularised inverse Fourier transform formula. 

The above estimation algorithm bears some similarity to the spectral estimation algorithm introduced by Belomestny and Reiss \cite{DBReiss}, \cite{BelReiss}. Let us also mention that the problem of statistical inference for L\'evy processes (or some their generalizations)  observed at low frequency was the subject of many studies, see, e.g. Neumann and Rei{\ss}~\cite{NeuReiss},  
Rei{\ss}~\cite{TestingReiss}, Kappus~\cite{AdaptiveKappus}, Trabs~\cite{CalibrationTrabs} and Jongbloed et al. ~\cite{JMV}.
Note that the last reference deals with the L\'evy-driven Ornstein-Uhlenbeck processes, which are not of the form \eqref{GOU}.
\par
The paper is organized as follows. In the next section, we formulate our main assumptions  and give some examples. In Section~\ref{estimation},  the main estimation algorithm is presented and discussed in details. Next, we analyze the convergence rates of the proposed algorithms in Section~\ref{sec4} and provide some numerical examples in Section~\ref{secsim}. The proofs of our theoretical results are collected in Section~\ref{theory}.

\section{Main setup}
\label{main_setup}
 In this article, we study the class of subordinators with finite L{\'e}vy measures as possible choice for the L\'evy process $(\xi_{t})$. In terms of the L{\'e}vy triplet \((\c,\a,\nu)\), this means that
 \begin{eqnarray}
   \label{cond1}
    \left\{
    \begin{aligned}
    \c\geq 0, &\qquad \sigma=0, \\
    \nu(\R_{-}) = 0, &\qquad \a:=  \nu(\R_{+}) < \infty.\\
    \end{aligned}
    \right.
    \end{eqnarray}
A detailed discussion of the subordination theory as well as various examples of such processes (Gamma, Poisson,  tempered stable, inverse Gaussian, Meixner processes, etc.), are given in  \cite{BNS}, \cite{Bertoin}, \cite{ContTankov},  \cite{Sato}, \cite{Schoutens}. Note that in the case of subordinators, the truncation function in the L{\'e}vy-Khinchine formula can be omitted, that is,  the characteristic exponent of \(\xi\) is equal to 
\begin{eqnarray}
\label{phis}
	\psi(z) =  \log \E \left[
		e^{\ii z \xi_{1}}
	\right] 
	=
	 \ii \c z  
	 + \int_{0}^{\infty} 
	\left(
		e^{\ii z x}  - 1
	\right) \nu(dx).
\end{eqnarray}      
Later on, we also need  the  Laplace exponent of \(\xi\), which is defined as \[\phi(z) := - \log \E \left[
		e^{-z \xi_{1}}
	\right] =  -\psi\left(\ii z \right).\] Under the assumption \eqref{cond1}  the Laplace exponent \(\phi(\cdot)\) is given by
\begin{eqnarray}
\label{phis2}
	\phi(z) &=& 
	\c z 
	+ \int_{0}^{\infty}
	\left(
		1 -  e^{- z u} 
	\right) \nu(du) \\
\label{phis3}
&=& 
	\c z 
	+ z \int_{0}^{\infty} 
	  e^{- z u} \nu\left( (u, +\infty) \right) du.
\end{eqnarray}
Let us summarise the main properties of the functional $A_{\infty}=\int_{0}^{\infty}e^{-\xi_{t}}dt$
in this case. 
\begin{prop}
The random variable $A_{\infty}$ admits a bounded density $\pi$ and fulfills
$\mathrm{E}\left[A_{\infty}^{s-1}\right]<\infty$ for all $s>0.$
If $\mu>0$, then $0<A_{\infty}\leq1/\mu$ a.s. Moreover, the following
relation holds for $\mathrm{Re}[z]>0,$
\begin{eqnarray}
\nonumber
\phi(z)=
z\frac{\M(z)}{\M(z+1)} & = & \mu z+\int_{0}^{\infty}(1-e^{-zx})\nu(dx),\\
\label{Mz}
 & = & \lambda+\mu z-\int_{0}^{\infty}e^{-zx}\nu(dx),
\end{eqnarray}
where \(\M(z)\) is the Mellin transform of \(\pi.\)
\end{prop}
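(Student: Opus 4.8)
The plan is to treat the three analytic assertions in increasing order of difficulty and to read off the Mellin identity from the recursion \eqref{momenty}. The a.s.\ bound is immediate: since $\xi$ is a subordinator it has nonnegative drift $\mu$ and only positive jumps, so $\xi_t\ge\mu t$ pathwise, giving $0<A_\infty=\int_0^\infty e^{-\xi_t}\,dt\le\int_0^\infty e^{-\mu t}\,dt=1/\mu$ a.s.\ when $\mu>0$. In particular $\E[A_\infty^{s-1}]\le\mu^{\,1-s}<\infty$ for all $s\ge1$. For the range $0<s<1$ I would bound $A_\infty$ from below using that, $\nu$ being finite with mass $\lambda$, the jumps form a Poisson process of rate $\lambda$ with first jump time $T_1$ exponential$(\lambda)$ (the case $\lambda=0$ gives $A_\infty\equiv1/\mu$ and is trivial). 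On $[0,T_1)$ the process is pure drift, so $A_\infty\ge\int_0^{T_1}e^{-\mu t}\,dt=(1-e^{-\mu T_1})/\mu\ge c\,(T_1\wedge1)$ for a suitable $c>0$, and hence $\E[A_\infty^{s-1}]\le c^{\,s-1}\,\E[(T_1\wedge1)^{s-1}]<\infty$, the exponential law having all moments of order $>-1$. This settles $\E[A_\infty^{s-1}]<\infty$ for every $s>0$.

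The existence and boundedness of the density $\pi$ is the substantive point, and I would approach it through the Mellin transform. By the moment bounds, $\M(z)=\E[A_\infty^{z-1}]$ is finite and analytic on $\Re(z)>0$ and, because $A_\infty\le1/\mu$, bounded on each half-plane $\Re(z)\ge c>0$. If $v\mapsto\M(c+\i v)$ can be shown to lie in $L^1(\R)$ for some $c>0$, then Mellin inversion gives a continuous version
\[
\pi(x)=\frac{1}{2\pi}\int_{\R}\M(c+\i v)\,x^{-c-\i v}\,dv,\qquad |\pi(x)|\le\frac{x^{-c}}{2\pi}\int_{\R}|\M(c+\i v)|\,dv,
\]
which is bounded on any region bounded away from $0$; near $0$ the lower bound above forces $\pi(x)\to0$, so boundedness on all of $(0,1/\mu]$ follows once the $L^1$ control is in hand. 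To produce it I would iterate \eqref{momenty} as $\M(c+\i v)=\M(c+N+\i v)\prod_{k=0}^{N-1}\phi(c+k+\i v)/(c+k+\i v)$, bound $|\M(c+N+\i v)|\le\M(c+N)$, and analyse the product through $\phi(c+\i v)=\mu(c+\i v)+\lambda-\int_0^\infty e^{-(c+\i v)u}\nu(du)$.

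The hard part will be exactly this decay estimate. A single factor $\phi(c+\i v)/(c+\i v)$ tends to $\mu$ in modulus as $|v|\to\infty$ and so contributes no decay, with the consequence that crude bounds on the product only show $\M(c+\i\cdot)$ bounded, not integrable. Genuine decay must come from the finer behaviour of $\phi$ — equivalently, from smoothness of the law of $\log A_\infty$ — and is probably cleanest to obtain either through a Plancherel ($L^2$) argument or via the affine fixed-point identity $A_\infty\eqd C+WA_\infty'$ with $C=(1-e^{-\mu T_1})/\mu$ and $W=e^{-\mu T_1-\Delta\xi_{T_1}}$, where the explicit, absolutely continuous law of the drift part $C$ supplies the smoothing responsible for the boundedness of $\pi$. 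I expect this to be the single step requiring real care.

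The displayed identity is then essentially bookkeeping. The equality $\phi(z)=z\,\M(z)/\M(z+1)$ is \eqref{momenty} rearranged and holds on $\Re(z)>0$ (for complex $z$ this is the cited result of Kuznetsov, Pardo and Savov, with $\M(z+1)\ne0$ there). For the integral forms I would start from $\phi(z)=-\psi(\i z)$ and the representation \eqref{phis}: the substitution gives $\phi(z)=\mu z+\int_0^\infty(1-e^{-zu})\,\nu(du)$, and since $\lambda=\nu(\R_+)<\infty$ the integral splits as $\lambda-\int_0^\infty e^{-zu}\,\nu(du)$, yielding the second form; both integrals converge for $\Re(z)>0$ because $\nu$ is finite.
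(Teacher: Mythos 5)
The paper contains no proof of this proposition at all: it is a summary of known facts, with \eqref{momenty} attributed to \cite{CPY} and \cite{Kuznets} in the surrounding text and the pathwise properties of $A_{\infty}$ taken from the exponential-functional literature \cite{BertoinYor}. So your attempt can only be measured against correctness, and the parts you actually complete are fine: the bound $\xi_{t}\geq\mu t$ giving $0<A_{\infty}\leq 1/\mu$, the positive moments from $A_{\infty}\leq 1/\mu$, the negative moments of order $s-1\in(-1,0)$ from $A_{\infty}\geq c\,(T_{1}\wedge 1)$ with $T_{1}$ exponential, and the derivation of \eqref{Mz} by substituting $\phi(z)=-\psi(\i z)$ into \eqref{phis} and splitting off $\lambda=\nu(\R_{+})<\infty$, with the identity $\phi(z)=z\M(z)/\M(z+1)$ (and the needed nonvanishing of $\M(z+1)$) delegated to \cite{Kuznets} exactly as the paper does. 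One small caveat: the standing assumption \eqref{cond1} only requires $\mu\geq 0$, and for $\mu=0$ your bound for moments of order $s\geq 1$ evaporates; there one should instead invoke \eqref{momenty} iteratively, e.g. $\E[A_{\infty}^{n}]=n!/(\phi(1)\cdots\phi(n))<\infty$, which holds for any nonzero subordinator.

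The genuine gap is the bounded-density claim, which you yourself flag as unresolved ("the single step requiring real care") and leave as two sketched strategies; as a proof the proposal is therefore incomplete precisely at the one substantive point. Moreover, the route you favor cannot close, for a structural reason visible inside the paper itself: Example~\ref{exx} computes $|\M(u^{\circ}+\i v)|\asymp L\,|v|^{-\lambda/\mu}$, so $\M$ along vertical lines fails to be in $L^{1}$ whenever $\lambda\leq\mu$, and in that very case the explicit density $\pi(x)=C\,x^{b}(1-\mu x)^{(\lambda/\mu)-1}$ is genuinely unbounded as $x\uparrow 1/\mu$ (a Beta density with second parameter $\lambda/\mu<1$). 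Thus the failure of your decay estimate is not a technical obstacle but a sign that the claim "bounded density," as stated without qualification, is false in general and needs a condition of the type $\lambda>\mu$ (or a local reading on compact subsets of $(0,1/\mu)$); no refinement of the product representation of $\M_{\theta}$, Plancherel argument, or fixed-point smoothing argument can prove it as stated. Separately, your remark that the lower bound on $A_{\infty}$ "forces $\pi(x)\to 0$ near $0$" is too quick: $A_{\infty}\geq c\,(T_{1}\wedge 1)$ only yields $\P\{A_{\infty}\leq x\}\lesssim x$, i.e. a bound on the averaged density near $0$, not on its pointwise values, so even on that end the boundedness argument would need the (missing, and in general unavailable) $L^{1}$ Mellin control.
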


\section{Estimation of the L{\'e}vy triplet}
\label{estimation}
In the sequel, we suppose that we are given by the observations \(X_{t_0},X_{t_1},\ldots, X_{t_n}\) of the process \eqref{GOU}  at the equidistant time points \( 0=t_0<t_{1} < \ldots < t_{n}\) with \(t_j=j\cdot \Delta\) for some \(\Delta>0\). Assuming that the process \(X_{t}\) is stationary  (see \cite{Behme}, \cite{Fasen}), we get that the random variables  \(X_{k}:=X_{t_k}, \; k=1,\ldots, n,\) have all the same distribution, which coincides with the distribution of \(A_{\infty}\).

The first step of our estimation procedure consists in the estimation the Laplace exponent \(\phi(z)\) for  \(z=u^{\circ}+\ii v\), where \(\uc>0\) is fixed and \(v\) varies.    An estimator of \(\phi(z)\) can be obtained from  the recursive formula  \eqref{momenty} for the Mellin transform of \(\pi\). In \cite{CPY}, this formula is proved  for real positive \(z\) such that \(\phi(z)>0\) and \(\M(z+1)<\infty\).  The case of complex \(z\) is considered in \cite{Kuznets}, where one can also find  some generalizations of the formula \eqref{momenty} to  the integrals with respect to Brownian motion with drift. In particular, applying Theorem 2 from \cite{Kuznets}, we get that  \eqref{momenty} holds for any \(z \in \Upsilon\). In the situation when  \((\xi_{t})\) is a subordinator, the set \(\Upsilon\) coincides with the positive half-plane (equivalently, the parameter \(\theta\)  is equal to infinity) due to 
\[
	\E\left[
		e^{-x \xi_{1}} 
	\right]
	= - \phi (x) = - \c x - x \int_{\R_{+}} e^{-x u} \nu\left( (u, +\infty) \right) du < 0, \quad \forall \; x>0.
\]

Motivated by \eqref{Mz}, we first estimate the Mellin transform \(\M(z) \)  via  its empirical counterpart 
\begin{eqnarray}
\label{step1}
	\M_{n}(z) : 
	=
	\frac{1}{n}
	\sum_{k=1}^{n} X_{k}^{z-1}
\end{eqnarray}
and then define an estimate of the Laplace exponent \(\phi(z)\)  by
\begin{eqnarray}
\label{step2}
	Y_{n} (z) 
	=
	z \frac{\M_{n}(z)}{\M_{n}(z+1)}.
\end{eqnarray}
If the sequence \(X_1,\ldots,X_n\) has some mixing properties, then we can expect that \(Y_n(z)\to \phi(z)\) in probability.
\subsection{Estimation of $\a$ and $\c$}
Under our assumptions, the Laplace exponent of the L\'evy process \((\xi_t)\) can be represented in the form:
\begin{eqnarray}
\label{estproc}
	\phi(\uc + \ii v)
 =  \lambda+\mu \cdot \left( \uc+\ii v \right) -\F[\bar\nu](-v)
 , \qquad v \in \R,
\end{eqnarray}
where \(\lambda:=\int_{\R_{+}} \nu(dx)\) and  \(\bar\nu(dx):= e^{-\uc x} \nu(dx)\). The general idea of the  procedure described below is to  estimate the Laplace exponent \(\phi(\cdot)\) at the points \(z=\uc+\ii v\) with  \(v\in \mathbb{R}\)   and then use  the relation \eqref{estproc} for the estimation of  parameters.  By the Riemann-Lebesque lemma,  \(\F[\bar\nu] (-v) \to  0\) as \(v \to +\infty\) (see, e.g., \cite{Kawata}) and we conclude from \eqref{estproc} that \(  \phi(\uc+\ii v)\)  is approximately (at least for large \(v\)) a linear function  in \(v\) with the slope \(\c\) and the intercept term \(\a\).  This observation suggests that a properly weighted least-squares approach can be applied to estimate \(\c\) and \(\a\). Let $V_{n}$ be a sequence of positive real numbers and $w(\cdot)$ be a
nonnegative weight function supported on $[0,1].$ Define a scaled weight function 
$w_{n}(v)=V_{n}^{-1}w(v/V_{n})$ and introduce the estimators of the parameters \(\lambda\) and \(\mu\) as the solution of the following optimization problem:
\begin{eqnarray*}
(\lambda_{n},\mu_{n}): & = & \argmin_{(\lambda,\mu)}\int_{0}^{\infty}w_{n}(v)\left|Y_{n}(u^{\circ}+\i v)-\mu\cdot (u^{\circ}+\i v)-\lambda\right|^{2}\, dv\\
 & = & \argmin_{(\lambda,\mu)}\int_{0}^{\infty}w_{n}(v)\left\{ \left|\mathrm{Im}\left[Y_{n}(u^{\circ}+\i v)\right]-\mu v\right|^{2} 
 \right. \\
 && \hspace{4cm} \left.
+ \left|\mathrm{Re}\left[Y_{n}(u^{\circ}+\i v)\right]-\lambda-\mu u^{\circ}\right|^{2}\right\} \, dv
\end{eqnarray*}
with $Y_{n}(z)$ defined in \eqref{step2}. The above optimisation problem admits an explicit solution given by
\begin{eqnarray*}
\mu_{n} & = & \frac{\int_{0}^{\infty}w_{n}(v)\mathrm{Im}\left[Y_{n}(u^{\circ}+\i v)\right]\, dv}{\int_{0}^{\infty}vw_{n}(v)\, dv}=\int_{0}^{\infty}w_{\mu,n}(v)\mathrm{Im}\left[Y_{n}(u^{\circ}+\i v)\right]\, dv\\
\lambda_{n} & = & \frac{\int_{0}^{\infty}w_{n}(v)\mathrm{Re}\left[Y_{n}(u^{\circ}+\i v)\right]\, dv}{\int_{0}^{\infty}w_{n}(v)\, dv}-\mu_{n}u^{\circ}\\\ && \hspace{4cm} =
\int_{0}^{\infty}w_{\lambda,n}(v)\mathrm{Re}\left[Y_{n}(u^{\circ}+\i v)\right]\, dv-\mu_{n}u^{\circ}
\end{eqnarray*}
with $w_{\mu,n}(v):=V_{n}^{-2} w_{\mu}(v/V_{n}) $ and $w_{\lambda,n}(v):=V_{n}^{-1}w_{\lambda} (v/V_{n})$,
where
\[
w_{\mu}(\cdot)= c_{1,w}^{-1} w(\cdot), \qquad
w_{\lambda}(\cdot) = c_{0,w}^{-1} w(\cdot), \qquad 
c_{i,w}=\int_{0}^{1}v^{i}w(v)\, dv,\quad i=0,1.
\]
Taking into account the definition of the weight function \(w_{n}(\cdot)\), we get also some equivalent representations of the estimators \(\mu_{n}\) and \(\lambda_{n}\)
\begin{eqnarray*}
\c_{n} &=&
\argmin_{\c} \int_{\eps}^{1} w(\alpha) 
	\Bigl(
		\Im [Y_{n} (u+\ii \alpha V_{n})] 
		- 
		\c \alpha V_{n}
	\Bigr)^{2} d\alpha \\
\a_{n} &:=&	
	\argmin_{\a}\int_{\eps}^{1} w(\alpha) 
	\Bigl(
		\Re [Y_{n} (u+\ii \alpha V_{n})] 
		- 
		\c_{n} u 
		-
		\a
	\Bigr)^{2} 
	d\alpha.
\end{eqnarray*}
In practice, we need to replace the above integrals by sums. To this end,   let the numbers \(\alpha_{1}, \ldots, \alpha_{M}\) constitute an equidistant grid  on the set \([\eps, 1]\) for some  \(\eps>0.\) We 
estimate the Mellin transform \(\M(z)\)   for all \(z\in \{\uc+\ii \alpha_{m} V_{n}, \; m=1,\ldots, M\}\) and \(z\in \{\uc-1+\i \alpha_{m} V_{n}, \; m=1,\ldots, M\}\)  
and so get the estimates of the Laplace exponent at the discrete points \(z=\uc+\ii \alpha_{m} V_{n}\) (see above). Now we  define an estimate of the parameter \(\c\) via 
\begin{eqnarray}
\label{hatcopt}
	\hat{\c}_{n} &:=& \argmin_{\c} \sum_{m=1}^{M}  w(\alpha_{m}) 
	\Bigl(
		\Im [Y_{n} (\uc+\ii \alpha_{m} V_{n})] 
		- 
		\c \alpha_{m} V_{n}
	\Bigr)^{2}\\
\label{hatc}
	&=& 
	 \frac{
	\sum_{m=1}^{M} 
		  w(\alpha_{m})  \alpha_{m} \: \Im [Y_{n} (\uc+\ii \alpha_{m} V_{n})] 
}
{	
	V_{n} \: \cdot \: \sum_{m=1}^{M} 
		  w(\alpha_{m})  \alpha_{m}^{2} 
}.
\end{eqnarray}

Afterwards, we estimate the parameter \(\a\) by 
\begin{eqnarray}
\label{hataopt}
	\hat{\a}_{n} &:=&
	\argmin_{\a} \sum_{m=1}^{M}   w(\alpha_{m})
	\Bigl(
		\Re [Y_{n} (\uc+\ii \alpha_{m} V_{n})] 
		- 
		\hat{\c}_{n} u - \a
	\Bigr)^{2}\\
\label{hata}
	&=&
	\frac{
		\sum_{m=1}^{M} w (\alpha_{m}) 	\Re [Y_{n}(\uc+\ii \alpha_{m} V_{n})] 
	}
	{
		\sum_{m=1}^{M}
		w (\alpha_{m})
	}
		-\hat{\c}_{n} \uc.
\end{eqnarray}
The whole algorithm is described below.\vsp

\begin{bclogo}[couleur=blue!15, logo=\bccrayon]
{Algorithm 1: Estimation of \(\a\) and \(\c\)}
\begin{algorithm}[H]
\SetAlgoLined

\textbf{Data:}
\(n\) observations \(X_{1},\ldots ,X_{n}\) of the GOU process \((X_t)\) observed at equidistant grid \(j\cdot \Delta,\) \(j=1,\ldots, n.\) 

\vspace{0.3cm}
\textbf{Initiate:} 
Fix  \(V_{n} \to \infty\), \(\eps \in (0,1)\) and \(\uc > -1\). 
\\
Set \(\alpha_{j}=\eps+j\cdot\left( 1- \eps\right)/M,\) \(j=1,\ldots,M.\) 
\\
Fix a  function \(w (\cdot) \geq 0\) supported on \([\eps,1]\).  
\\
Denote \(v_{m,n}:= \alpha_{m} V_{n}\).

\vspace{0.3cm}
\textbf{Algorithm:}
\begin{enumerate}
\item
Estimate the Mellin transform \(\M(z) := \E \left[ A_{\infty}^{z-1} \right]\)  \\
for \(z\in\{\uc+\ii v_{m,n},1+\uc+\ii v_{m,n},\, m=1,\ldots,M\}\) via 
\begin{eqnarray*}
		\M_{n}(z) =
	\frac{1}{n}
	\sum_{k=1}^{n} X_{k}^{z-1}.
\end{eqnarray*}
\item Estimate the Laplace exponent
\(\phi(z) := - \log \E \left[
		e^{-z \xi_{1}}
	\right]\) \\
at the points  \(z\in\{\uc+\ii v_{m,n},\, m=1,\ldots,M\}\)  by
\begin{eqnarray*}
	Y_{n} (z) 
	=
	z \frac{\M_{n}(z)}{\M_{n}(z+1)}.
\end{eqnarray*}

\item[3.] Estimate \(\c\) by 
\begin{eqnarray*}
\label{opt}
	\c_{n} :=  
		 \frac{
	\sum_{m=1}^{M} 
		  w(\alpha_{m})  \alpha_{m} \: \Im [Y_{n} (\uc+\ii v_{m,n})] 
}
{	
	V_{n} \: \cdot \: \sum_{m=1}^{M} 
		  w(\alpha_{m})\,  \alpha_{m}^{2} 
}.
\end{eqnarray*}
\item[4.] Estimate \(\a\) by 
\begin{eqnarray*}
	\a_{n} := 
		\frac{
		\sum_{m=1}^{M} w (\alpha_{m}) 	\Re [Y_{n}(\uc+\ii v_{m,n})] 
	}
	{
		\sum_{m=1}^{M}
		w (\alpha_{m})
	}
		-\c_{n} \uc.
\end{eqnarray*}
\end{enumerate}
\end{algorithm} 
\end{bclogo}

\subsection{Estimation of the L{\'e}vy measure $\nu$}
As a result of Algorithm~1, we obtain the estimates \(\c_{n}\) and \(\a_{n}\) of the parameters \(\c\) and \(\a,\) respectively.  Based on \eqref{estproc}, we first define an estimate for the Fourier transform of \(\bar\nu\) via
\begin{eqnarray}
\label{step5}
	 \hat\F[\bar\nu](-v)  =  - Y_{n}(\uc+\ii v) + \c_{n} \cdot (\uc+\ii v)   +  \a_{n}.
\end{eqnarray}
Next we estimate  the measure \(\nu\) by a regularised Fourier inversion formula
\begin{eqnarray}
\label{step6}
	\nu_n(x)  &=&  \frac{e^{\uc x}}{2 \pi}  \int_{\R} e^{\ii v x } \hat\F[\bar\nu](-v) \K (-v / V_{n})\,  dv,
\end{eqnarray}
where \(\K\) is a regularizing symmetric  kernel supported on \([-1,1]\). Note that with a slight abuse of notation, we use \(\nu\) also for the density of the L{\'e}vy measure, and \(\nu_{n}\) for an estimate of this density. In what follows, we  also use the notation \(\bar{\nu}_{n} = e^{-\uc x} \nu_{n}\). The formal description of the algorithm is given below.


\begin{bclogo}[couleur=blue!15, logo=\bccrayon]
{Algorithm 2: Estimation of \(\nu\)}
\begin{algorithm}[H]
\SetAlgoLined

\textbf{Data:}
\(n\) observations \(X_{1},\ldots ,X_{n}\) of the GOU process \((X_t)\) observed at equidistant grid points \(j\cdot \Delta,\) \(j=1,\ldots, n.\) 

\vspace{0.3cm}
\textbf{Initiate:} 
Fix  \(V_{n} \to \infty\) and \(\uc > -1\). 
\\
Set \(\alpha_{m}=-1+2\cdot j/M,\) \(m=0,\ldots,M.\) 
\\
Fix  a regularizing kernel \(\K\)  supported on \([-1,1]\).\\
Denote \(v_{m,n}:= \alpha_{m} V_{n}\).
\vspace{0.3cm}

\textbf{Algorithm:}
\begin{enumerate}
\item[1-2] The first two steps coincide with ones of Algorithm 1.
\item[3.]  Estimate  \(\F[\bar\nu](-v_{m,n}) \) for \(\bar{\nu} (dx) =e^{-\uc x} \nu (dx)\)  by
\begin{eqnarray*}
	 \hat\F[\bar\nu](-v_{m,n})  =  - Y_{n}(u+\ii v_{m,n}) + \c_{n} \cdot (u+\ii v_{m,n})  +  \a_{n}
\end{eqnarray*}
for \(m=0,\ldots,M.\)
\item[4.]
Estimate \(\nu\) by
\begin{eqnarray*}
	\bar{\nu}_n(x)  &=&  e^{\uc x}  \frac{1}{2\pi\cdot (1+M) }  \sum_{m=0}^{M}{e^{\ii v_{m,n} x }} \hat\F[\bar\nu](-v_{m,n}) \K (\alpha_{m}).
\end{eqnarray*}

\end{enumerate}
\end{algorithm}
\end{bclogo}

\begin{rem}
It is a worth mentioning that the estimation Algorithms 1 and 2 can be applied to a more general situation when 
\begin{eqnarray}
\label{xi}
\xi_{t} =
	\c t  + \tau_{t},
\end{eqnarray}
where the process \(\tau_{t}\) is a difference between two subordinators, i.e., \(\tau_{t} = \tau^{+}_{t} + \tau^{-}_{t}\), and  \(\tau^{+}\) and \(\tau^{-}\) are the processes of finite variation with L{\'e}vy measures \(\nu^{+}\) and \(\nu^{-}\) concentrated on \(\R_{+}\) and \(\R_{-},\) respectively. In fact, in this case, the formula \eqref{estproc} still holds with \[\nu(dx) = \I \{x>0\} \nu^{+} (dx) + \I \{x<0\} \nu^{-}(dx).\] 
Therefore, the consequent estimation of \(\c\), \(\a\) and the Fourier transform of the measure \(e^{-\uc x} \nu(dx)\), as well as the estimation of \(\nu\) are still possible. 
\end{rem}

\section{Convergence}
\label{sec4}
In order to analyse the convergence properties of the estimates \(\mu_n,\) \(\lambda_n\) and \(\nu_n\) we need to further specify the class of L\'evy processes \((\xi_t).\)
\label{further}
\begin{defi}
For $s\in\mathbb{N}\cup \{0\}$ and $R>0,$ let ${\cal G}(s,R)$
denote the set of all L\'evy triplets $(\mu,0,\nu)$, such
that  $\nu$ is supported on $\mathbb{R}_{+}$
and
\begin{eqnarray}
\label{cond2}
\max\left\{\nu(\mathbb{R}_{+}), \int_{\R} |v|^{2 s} \left|
		\F[\bar\nu] (v)
	\right|^{2}\,  dv\right\} \leq  R,
\end{eqnarray}
where \(\bar{\nu}(dx)=e^{-u^{\circ}x}\nu(dx).\) 
\end{defi}
Note that if  \eqref{cond2} holds, then \(\bar\nu\) is $s$-times (weakly) differentiable
with 
\begin{eqnarray}
\label{cond3}
\bigl\Vert \bar{\nu}^{(s)} \bigr\Vert _{\infty}\leq \frac{1}{2\pi}\int_{\R} |v|^s \left| \F[\bar\nu] (-v) \right|\, dv<\infty.
\end{eqnarray}
\par
It turns out that the convergence rates of the estimates \(\mu_n,\) \(\lambda_n\) and \(\nu_n\) crucially depend on the asymptotic behaviour of the Mellin transform of \(A_\infty.\) In order to specify this behaviour, let us  fix some $u^{0}>0$ and introduce two classes of probability densities:
\begin{eqnarray}
\label{P}
\mathcal{P}(\beta,L) & := & \left\{ p:\,\liminf_{|v|\to\infty}\left[|v|^{\beta}\left|\mathcal{M}[p](u^{\circ}+\i v)\right|\right] \geq L\right\} ,\\
\label{E}
\mathcal{E}(\alpha,L) & := & \left\{ p:\,\liminf_{|v|\to\infty}\left[e^{\alpha|v|}\left|\mathcal{M}[p](u^{\circ}+\i v)\right|\right] \geq L\right\},
\end{eqnarray}
where \(\alpha, \beta \in \R,\)  \(L>0\) and for any density \(p,\) \(\mathcal{M}[p]\) stands for the Mellin transform of \(p.\)
Before we formulate the main convergence results, let us look at some examples. 

\begin{ex} Consider the class of L{\'e}vy processes with \(\c=0,\) \(\sigma=0\) and the L{\'e}vy  density \(\nu\) of the form
\[
\nu(x) =  \sum_{j=1}^{N} \left[\sum_{k=1}^{{m_{j}}} g_{jk} x^{k-1}\right] e^{-\rho_{j}x} \cdot \I\{x>0\} 
\]
with \(N, m_{j} \in \N\), \(\rho_{j}>0\), \(g_{jk}>0\). First note that the assumption \eqref{cond1} obviously holds. Let us now check \eqref{cond2}.  We can apply the well-known Erd{\'e}lyi lemma to derive
\[
	\int_{\R_{+}} x^{k-1} f(x)  e^{i v x} dx \asymp  c_{1} v^{-k}, \qquad v \to \infty 
\]
for any exponentially decaying and smooth function \(f\) on \(\R_+\), and some complex \(c_{1}\) depending on \(f\).  Therefore, we conclude that 
\begin{multline*}
	\left| \F[\bar\nu] (-v) \right| 
	=
	\left| 
		\sum_{j=1}^{N} \sum_{k=1}^{{m_{j}}} \alpha_{jk}
		\int_{\R_{+}}  x^{k-1} f_{j} (x)
		e^{i v x}
		dx
	\right| 
	 \asymp c_{2} v^{-k^{*}}, \\ 
	 \mbox{where} \quad
	f_{j}(x)= e^{- (\rho_{j}+u^{\circ}) x},
 \quad k^{*}:=\argmin_{k} 
	\left\{
		\exists \: j: \; \alpha_{jk} \ne 0 
	\right\},
\end{multline*}
where \(c_{2}>0\) depends on \(u^{\circ}\). Hence for any \(s<k^{*}-1\), the condition \eqref{cond2} holds for some \(R>0\).
Furthermore, taking into account the asymptotic behaviour of the Gamma function (see, e.g., formula~8.328 from \cite{Jeffrey}):
\begin{eqnarray}
\label{GR}
 	\left| \Gamma(u+\ii v) \right|= \exp\left\{
		-\frac{\pi}{2} v +\left(u - \frac{1}{2} \right) \ln  v 
	\right\} \cdot \sqrt{2 \pi} \left( 1 + o(1) \right), \qquad v \to \infty, 
\end{eqnarray}
we derive  
\begin{eqnarray*}
\left| 
	\M(\uc + \i v )
\right|
\asymp \sqrt{2 \pi} A^{1- \uc} \exp \left\{ 
	- \frac{\pi}{2} v + 
	\left( 
		\uc - \frac{1}{2} + \sum_{j=1}^{N} \rho_{j}m_{j} + \sum_{j=1}^{K} \Re(\zeta_{j})
	\right) \ln v
\right\},
\end{eqnarray*}
where \(\zeta_{1},\ldots, \zeta_{K}\) are the roots of the equation 
\begin{eqnarray*}
\sum_{j=1}^{N} \sum_{k=1}^{m_{j}} \frac{
	g_{jk} (k-1)!
	}{
		\left( \rho_{j} +z \right)^{k} 
	}= \lambda - \mu z,
\end{eqnarray*}
see \cite{Kuznets2}.
Therefore, for any  \(\uc>1/2\), we conclude that \( \pi \in \mathcal{E}(\pi /2 , L)\) with any \(L>0\).

\end{ex}
\begin{ex}\label{exx} Next, we provide an example of a L{\'e}vy process \(\xi_t\) with \(A_\infty=\int_0^\infty e^{-\xi_t}\,dt\) having a density from  \(\mathcal{P}(\beta,L)\). Consider a subordinator \(\T\) with drift \(\c>0\) and the L{\'e}vy density 
\begin{eqnarray*}
\nu(x)= ab \exp\{-bx\}\: I\{x>0\}, \quad a, b >0.
\end{eqnarray*}
 The exponential functional \(A_\infty\) of the process \((\xi_t)\) has a density of the form
\begin{eqnarray*}
	\pi(x) = C_{1} x^{b} (1- \c x)^{(a/\c) -1 } \: I\{0< x<1/\c\}
\end{eqnarray*}
with some \(C_{1}>0\),  see \cite{CPY}.  In other words,  \(A_{\infty}\) has the same distribution as \(\xi/\c\), where the r.v. \(\xi\) has  the Beta distribution  with parameters \(\alpha=b+1\) and \(\beta = a/ \c = \lambda/ \c\). The Mellin transform of the function \(\pi(x)\) in the half-plane \(\Re(s)>-\alpha\) is hence given by 
\begin{eqnarray*}
	\M(z) = \frac{\E \left[
		\xi^{z-1}
	\right]}{\c^{z-1}}
	 &=&
	 \frac{1}{\c^{z-1}}
	\frac{
		B (z+\alpha - 1, \beta)
	}{
		B (\alpha, \beta)
	}\\
	&=&
	\frac{
		\Gamma (\alpha + \beta)
	}{
		\Gamma (\alpha)
	}
	\cdot 
	\frac{1}{\c^{z-1}}
	\frac{
		\Gamma (z+\alpha - 1)
	}{
		\Gamma (z+\alpha+\beta - 1)
	}.
\end{eqnarray*}
Using \eqref{GR}, we conclude that  the  Mellin transform of \(A_\infty\) has a polynomial decay in this case. More precisely, 
\[ |\M(\uc+ \i v)|  \asymp L\cdot |v|^{-\lambda/\c}  \quad \mbox{ with } \quad L=\c^{-\uc+1}  \frac{\Gamma(\lambda/\mu+b+1)}{\Gamma(b+1)},\] 
 as \(|v|\to\infty\) and therefore \(\pi \in \mathcal{P}(\lambda/ \mu, L)\).
\end{ex}

Let us now formulate the main result concerning the convergence  of the estimates  \(\mu_{n}\) and \(\lambda_{n}\).

\begin{thm}[upper bounds for $\c_{n}$ and $\a_{n}$]
\label{corub} 
Let \((\xi_{t})\) be a L{\'e}vy process with a triplet from $\mathcal{G} (s,  R)$. Suppose that the sequence  $X_{0},X_{1},\ldots,X_{n}$   is $\alpha$-mixing  and strictly stationary.  Denote the \(\alpha\)-mixing coefficients of the sequence $X_{0},X_{1},\ldots,X_{n}$  by \(\alpha(s)\). 
\begin{enumerate}[(i)]
\item  Assume that the  density \(\pi\) of \(A_\infty\) belongs to \(\mathcal{P}(\beta,L)\) with some \(\beta \in \R\) and \(L>0\), and moreover 
\begin{eqnarray}
\label{expcond}
\alpha(j)\lesssim e^{-j\alpha^{*}}, \quad j\in\mathbb{N}, \quad\mbox{for some} \quad \alpha^{*} \geq 0.
\end{eqnarray}
Then the quadratic risks of the estimates \(\mu_n\) and \(\lambda_n,\) under the choice \(V_{n}=n^{1/(2\beta+2s+3)},\) satisfy the following asymptotic relations
\[
\mathrm{E}\left[\left|\mu_{n}-\mu\right|^{2}\right]\lesssim n^{-2(s+2)/(2\beta+2s+3)}\log(n)
\]
and 
\[
\mathrm{E}\left[\left|\lambda_{n}-\lambda\right|^{2}\right]\lesssim n^{-2(s+1)/(2\beta+2s+3)}\log(n),
\]
as \(n\to\infty.\)
\item 
If  \(\pi\in \mE(\alpha, L)\) and
\begin{eqnarray}
\label{expcond_pol}
\alpha(j)\lesssim j^{-\alpha^{*}},\quad j\in\mathbb{N}, \quad \mbox{for some} \quad \alpha^{*} \geq 2,
\end{eqnarray}
then the choice 
\[V_{n}=\frac{1}{2\alpha}\log(n)-\frac{s+2}{\alpha}\log(\log(n)),\] leads to the rates 
\[
\mathrm{E}\left[\left|\mu_{n}-\mu\right|^{2}\right]\lesssim\log^{-2(s+2)}(n),
\]
\[
\mathrm{E}\left[\left|\lambda_{n}-\lambda\right|^{2}\right]\lesssim\log^{-2(s+1)}(n).
\]
\end{enumerate}
\end{thm}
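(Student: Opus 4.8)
The plan is to handle $\mu_n$ and $\lambda_n$ separately, in each case splitting the error into a deterministic (bias) part and a stochastic part. Replacing the estimator $Y_n$ of \eqref{step2} by the true Laplace exponent $\phi$ in the explicit formulas \eqref{hatc}--\eqref{hata} produces deterministic quantities $\mu_n^{\circ},\lambda_n^{\circ}$, and I decompose
\[
\mu_n-\mu=(\mu_n^{\circ}-\mu)+(\mu_n-\mu_n^{\circ}),\qquad \lambda_n-\lambda=(\lambda_n^{\circ}-\lambda)+(\lambda_n-\lambda_n^{\circ}),
\]
so that $\E|\mu_n-\mu|^{2}\lsim|\mu_n^{\circ}-\mu|^{2}+\E|\mu_n-\mu_n^{\circ}|^{2}$ and likewise for $\lambda_n$. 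The first summand is a bias controlled by the smoothness of $\bar\nu$; the second is a variance controlled by the fluctuations of the empirical Mellin transform $\M_n$, amplified by the decay of $\M$.

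For the bias I insert \eqref{estproc}. Since $\Im[\phi(\uc+\i v)]=\mu v-\Im\F[\bar\nu](-v)$ and $\Re[\phi(\uc+\i v)]=\lambda+\mu\uc-\Re\F[\bar\nu](-v)$, the part linear in $v$ reproduces $\mu$ (resp.\ $\lambda$) exactly, and what is left are the weighted averages
\[
\mu_n^{\circ}-\mu=-\frac{\sum_m w(\alpha_m)\alpha_m\,\Im\F[\bar\nu](-\alpha_m V_n)}{V_n\sum_m w(\alpha_m)\alpha_m^{2}},\qquad \lambda_n^{\circ}-\lambda=-\frac{\sum_m w(\alpha_m)\,\Re\F[\bar\nu](-\alpha_m V_n)}{\sum_m w(\alpha_m)}+O(V_n^{-(s+2)}),
\]
the last remainder coming from the subtracted $\mu_n^{\circ}\uc$. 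Passing to the integral form, substituting $v=\alpha V_n$, and using that on $[\eps V_n,V_n]$ the factors $v^{1-s}$ and $v^{-s}$ are bounded by constant multiples of $V_n^{1-s}$ and $V_n^{-s}$, I bound both by a constant times $\int_{\R}|v|^{s}|\F[\bar\nu](-v)|\,dv$, which is finite for a triplet in $\mathcal{G}(s,R)$ by \eqref{cond3}. This yields $|\mu_n^{\circ}-\mu|\lsim V_n^{-(s+2)}$ and $|\lambda_n^{\circ}-\lambda|\lsim V_n^{-(s+1)}$.

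For the variance I linearize the ratio \eqref{step2}: with $\Delta(z):=\M_n(z)-\M(z)$, on the event that $\M_n(z+1)$ is close to $\M(z+1)$ one has
\[
Y_n(z)-\phi(z)=\phi(z)\left(\frac{\Delta(z)}{\M(z)}-\frac{\Delta(z+1)}{\M(z+1)}\right)+\text{(remainder)}.
\]
Since all moments $\E[A_\infty^{\gamma}]$, $\gamma>-1$, are finite (Proposition of Section~\ref{main_setup}), the summands $X_k^{z-1}$ have second moment $\E[A_\infty^{2(\uc-1)}]$ uniform in $v$; together with Davydov's covariance inequality for $\alpha$-mixing sequences and the summability of $\alpha(j)$ under \eqref{expcond} (resp.\ \eqref{expcond_pol} with $\alpha^{*}\ge2$) this gives $\Var(\Delta(z))\lsim n^{-1}$ uniformly in $v$. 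The lower bounds then enter: for $\pi\in\mathcal{P}(\beta,L)$ one has $|\M(\uc+\i v)|\gsim|v|^{-\beta}$ and $|\phi(\uc+\i v)|\lsim|v|$, so $\Var(Y_n(\uc+\i v)-\phi(\uc+\i v))\lsim|v|^{2+2\beta}/n$, while for $\pi\in\mathcal{E}(\alpha,L)$ the amplification is $e^{2\alpha|v|}$. Propagating through the linear functionals \eqref{hatc}--\eqref{hata}, and using that the cross-covariances of $\M_n$ at two frequencies are proportional to $\M(2\uc\pm1+\i(v-v'))$, which decays in $|v-v'|$ (the boundedness of $\pi$ ensures the requisite decay of $\M$ along vertical lines), the near-diagonal terms dominate and give $\Var(\mu_n-\mu_n^{\circ})\lsim V_n^{2\beta-1}/n$ and $\Var(\lambda_n-\lambda_n^{\circ})\lsim V_n^{1+2\beta}/n$ in the polynomial case, with the exponential analogue carrying $e^{2\alpha V_n}$.

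The main obstacle is to make the linearization rigorous in $L^{2}$: the ratio $Y_n$ is unstable where $\M_n(z+1)$ is small, and $1/|\M(z+1)|$ is as large as $V_n^{\beta}$ (polynomial in $n$) or $e^{\alpha V_n}$. I control this by imposing the good event $\{|\M_n(\uc+1+\i v_{m,n})|\ge\tfrac12|\M(\uc+1+\i v_{m,n})|\}$ simultaneously over the grid $m=1,\dots,M$, via an exponential deviation inequality for $\alpha$-mixing sums and a union bound over the $M\asymp\mathrm{poly}(n)$ frequencies; this union bound is the source of the extra $\log(n)$ in part~(i), while on the complement the crude bound is killed by its small probability. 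One checks the remainder is negligible because the relative error $\Delta/\M$ is of order $V_n^{\beta}/\sqrt n\to0$ under the stated $V_n$. Combining bias and variance gives $\E|\mu_n-\mu|^{2}\lsim\bigl(V_n^{-2(s+2)}+V_n^{2\beta-1}/n\bigr)\log n$ and $\E|\lambda_n-\lambda|^{2}\lsim\bigl(V_n^{-2(s+1)}+V_n^{1+2\beta}/n\bigr)\log n$; the choice $V_n=n^{1/(2\beta+2s+3)}$ equalises the two terms and yields the rates in~(i). In case~(ii) the bias dominates, and solving $e^{2\alpha V_n}/n\asymp V_n^{-2(s+2)}$ produces $V_n=\frac{1}{2\alpha}\log(n)-\frac{s+2}{\alpha}\log(\log(n))$ and the logarithmic rates in~(ii).
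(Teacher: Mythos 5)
Your overall architecture --- splitting around the deterministic proxy obtained by substituting $\phi$ for $Y_n$, bounding the bias through \eqref{cond3} after rescaling $v=\alpha V_n$, using Davydov's inequality for the $\alpha$-mixing fluctuations of $\M_n$, and assembling the rates with the stated $V_n$ --- matches the paper's proof (Proposition~\ref{propub} together with the computations in Section~\ref{upplmu}), and your bias orders $V_n^{-(s+2)}$, $V_n^{-(s+1)}$ and variance orders $V_n^{2\beta-1}/n$, $V_n^{2\beta+1}/n$ (resp.\ the $e^{2\alpha V_n}$ analogues) are all correct. However, your variance mechanism diverges from the paper's in a way that matters. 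The $\log(n)$ in part~(i) does not come from a union bound over the grid: the paper works with the integral form of $\mu_n,\lambda_n$ (no grid appears in the proof) and splits the time-lag covariance sum $\sum_j\mathrm{Cov}\bigl(X_0^{z-1},X_j^{z-1}\bigr)$ at a lag $p$, bounding the first $p$ lags by Cauchy--Schwarz (yielding the factor $p/n$ in front of the weighted $L^2$ integral) and the tail by $\sum_{j>p}\alpha(j)$ times a squared $L^1$-type integral; taking $p=c\log(n)$ under exponential mixing is precisely what produces the logarithm. Your alternative --- near-diagonal domination of the frequency cross-covariances via decay of $\M$ along vertical lines --- is plausible but incomplete as stated: boundedness of $\pi$ gives only $L^2$-decay of $\M(u+\i\,\cdot)$, not the pointwise off-diagonal rate your domination argument needs, whereas the paper's lag-splitting uses nothing beyond the mixing coefficients you already assume.

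The genuine gap is in your handling of the ratio instability. Off the good event $\{|\M_n(\uc+1+\i v_{m,n})|\ge\tfrac12|\M(\uc+1+\i v_{m,n})|\}$ the statistic $Y_n$, and hence $\mu_n$, is unbounded (the empirical denominator can be arbitrarily small and the estimator is not truncated), so the claim that ``the crude bound is killed by its small probability'' presupposes an a priori moment bound on the bad event that does not exist for the estimator as defined; without truncating or regularizing $Y_n$, the $L^2$ risk on the complement cannot be controlled. Moreover, in part~(ii) you assume only polynomial mixing ($\alpha^*\ge2$), under which the exponential deviation inequality you invoke for $\alpha$-mixing sums is not available, so the union bound over $M\asymp\mathrm{poly}(n)$ frequencies cannot deliver a sufficiently small bad-event probability there. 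To be fair, the paper treats the same point only informally --- condition \eqref{eq:Vn_cond} yields $\sup_{v\in[0,V_n]}|R_{2,n}(\uc+\i v)|^2=o_P(1)$, which is then used inside a variance computation --- but your proposed fix, as written, does not close this hole either; a rigorous version would require a truncated estimator (and, in case~(ii), tail bounds adapted to polynomial mixing, e.g.\ of Fuk--Nagaev type).
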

\begin{proof}
	Proof is given in Section~\ref{upplmu}. 
\end{proof}
In a similar way, we can establish the upper bounds for the risk of  \(\bar\nu_{n}\). In the theorem formulated  below, the quality of the estimate \(\bar\nu_{n}\) is measured in terms of the mean integrated squared error (MISE)
\begin{eqnarray*}
\textrm{MISE} (\bar{\nu}_{n}) &:=& 
\E \left[ 
	\int_{\R} \left|
		\bar{\nu}_{n}(x) - \bar\nu(x)
	\right|^{2}  dx
\right].
\end{eqnarray*}

\begin{thm}[upper bounds for $\bar\nu_{n}$]
\label{nuub}
 Let the assumptions of Theorem~\ref{corub} be fulfilled and let   \(\K(\cdot)\) be a kernel satisfying
\begin{eqnarray}
\label{assk} 
	|1 - \K(x)| \leq A |x|^{s}, \qquad \forall x \in \R \setminus \{0\}
\end{eqnarray}
with some \(A>0.\)
\begin{enumerate}[(i)]
\item  Assume that the  density of \(A_\infty\) belongs to \(\mathcal{P}(\beta,L)\) with some \(\beta \in \R\) and \(L>0\), and moreover 
\begin{eqnarray*}
\alpha(j)\lesssim e^{-j\alpha^{*}},\quad j\in\mathbb{N}, \quad \mbox{for some} \quad \alpha^{*} > 0.
\end{eqnarray*}
Then under the choice \(V_{n}=n^{1/(2\beta+2s+3)}\),  the MISE of the estimator \(\bar\nu_n\) is bounded as follows:
\[
\textrm{MISE} (\bar{\nu}_{n}) \lesssim n^{-2s/(2\beta+2s+3)},\quad n\to\infty.
\]
\item 
If the density of \(A_\infty\) belongs to the class \(\mE(\alpha, L)\) and 
\begin{eqnarray*}
\label{expcond_pol}
\alpha(j)\lesssim j^{-\alpha^{*}},\quad j\in\mathbb{N}, \quad \mbox{for some} \quad \alpha^{*} \geq 2,
\end{eqnarray*}
then under the choice
\[V_{n}=\frac{1}{2\alpha}\log(n)-\frac{s+2}{\alpha}\log(\log(n))\]
we have 
\[
\textrm{MISE} (\bar{\nu}_{n}) \lesssim\log^{-2 s}(n),\quad n\to\infty.
\]
\end{enumerate}
\end{thm}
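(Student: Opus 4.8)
The plan is to transfer the problem to the Fourier domain via Plancherel's theorem. By \eqref{step6} and the relation $\bar\nu_n=e^{-\uc x}\nu_n$, the estimator $\bar\nu_n$ is the Fourier inversion of $v\mapsto\hat\F[\bar\nu](-v)\K(-v/V_n)$, while $\bar\nu$ itself is the inversion of $v\mapsto\F[\bar\nu](-v)$, so that
\[
\mathrm{MISE}(\bar\nu_n)=\frac{1}{2\pi}\,\E\int_{\R}\Bigl|\hat\F[\bar\nu](-v)\,\K(-v/V_n)-\F[\bar\nu](-v)\Bigr|^{2}\,dv.
\]
Writing the integrand error as $\bigl(\hat\F[\bar\nu](-v)-\F[\bar\nu](-v)\bigr)\K(-v/V_n)+\F[\bar\nu](-v)\bigl(\K(-v/V_n)-1\bigr)$ and using $|a+b|^2\le2|a|^2+2|b|^2$, I would split the risk into a stochastic term and a deterministic bias term. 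The bias is handled at once: the kernel condition \eqref{assk} gives $|1-\K(-v/V_n)|^2\le A^2V_n^{-2s}|v|^{2s}$, so by the definition of $\mathcal{G}(s,R)$ in \eqref{cond2},
\[
\int_{\R}|\F[\bar\nu](-v)|^{2}\,|1-\K(-v/V_n)|^{2}\,dv\le A^{2}V_n^{-2s}\int_{\R}|v|^{2s}|\F[\bar\nu](-v)|^{2}\,dv\le A^{2}R\,V_n^{-2s}.
\]

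For the stochastic term, the support of $\K$ restricts the integration to $|v|\le V_n$. Combining \eqref{estproc} with the definition \eqref{step5}, the error decomposes as
\[
\hat\F[\bar\nu](-v)-\F[\bar\nu](-v)=-\bigl(Y_n(\uc+\i v)-\phi(\uc+\i v)\bigr)+(\c_n-\c)(\uc+\i v)+(\a_n-\a),
\]
and I would bound its squared modulus by three pieces. Since $\int_{|v|\le V_n}|\uc+\i v|^2\,dv$ is of order $V_n^3$, the contributions of the last two pieces are at most of order $\E[|\c_n-\c|^2]\,V_n^3+\E[|\a_n-\a|^2]\,V_n$. Inserting the rates of Theorem~\ref{corub} together with the prescribed $V_n$, a short computation shows both are of strictly smaller order than the claimed MISE rate (they carry an extra factor that is polynomially, resp. logarithmically, small in $n$), so they are negligible.

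The dominant contribution is therefore $\int_{|v|\le V_n}\E\bigl|Y_n(\uc+\i v)-\phi(\uc+\i v)\bigr|^2\,dv$. Here I would linearise the ratio $Y_n(z)=z\M_n(z)/\M_n(z+1)$ about $\phi(z)=z\M(z)/\M(z+1)$, using that $\E[\M_n(z)]=\M(z)$ by stationarity and that the increments $\M_n(z)-\M(z)$ are averages of the variables $X_k^{z-1}$, which are bounded since $A_\infty\le1/\c$ when $\c>0$. The $\alpha$-mixing covariance inequalities under \eqref{expcond}, resp. \eqref{expcond_pol}, then yield $\E|\M_n(z)-\M(z)|^2\lesssim n^{-1}$ uniformly in $v$; and since $|\M(z)/\M(z+1)|=|\phi(z)|/|z|$ remains bounded for large $|v|$, the linearisation produces the pointwise estimate
\[
\E\bigl|Y_n(\uc+\i v)-\phi(\uc+\i v)\bigr|^{2}\lesssim\frac{|v|^{2}}{n\,|\M(\uc+1+\i v)|^{2}}.
\]
The denominator is bounded from below by the class assumption: for $\pi\in\mathcal{P}(\beta,L)$ it is of order $|v|^{-\beta}$, and for $\pi\in\mathcal{E}(\alpha,L)$ of order $e^{-\alpha|v|}$.

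It remains to integrate over $|v|\le V_n$ and balance the two errors. In the polynomial case $\int_{|v|\le V_n}|v|^{2\beta+2}n^{-1}\,dv\asymp V_n^{2\beta+3}/n$, which under $V_n=n^{1/(2\beta+2s+3)}$ equals $n^{-2s/(2\beta+2s+3)}$ and balances the bias $V_n^{-2s}$ exactly, giving part~(i). In the exponential case the same integral is of order $V_n^2e^{2\alpha V_n}/n$, and the logarithmic choice of $V_n$ makes it of order $(\log n)^{-2s-2}$, dominated by the bias $V_n^{-2s}\asymp(\log n)^{-2s}$, so the bias determines the rate in part~(ii). The main obstacle is the pointwise second-moment bound for the nonlinear functional $Y_n$: one must simultaneously obtain the mixing-based variance bound for the empirical Mellin transform and secure a uniform lower bound on $|\M_n(\uc+1+\i v)|$ on a high-probability event so that the linearisation is licit, which is precisely where the growth of $1/|\M|$ dictated by the classes $\mathcal{P}$ and $\mathcal{E}$ interacts with the $n^{-1}$ fluctuations to create the trade-off. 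This estimate is established in the course of the proof of Theorem~\ref{corub} and can be invoked here.
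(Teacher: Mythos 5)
Your proposal is correct and follows essentially the same route as the paper's: Plancherel's identity, the decomposition of $\hat\F[\bar\nu]\,\K(\cdot/V_n)-\F[\bar\nu]$ into a linearised $Y_n-\phi$ term (with the linearisation licensed by the uniform control of $R_{2,n}$ from the proof of Theorem~\ref{corub}), the $(\c_n-\c)$ and $(\a_n-\a)$ contributions weighted by $V_n^{3}$ and $V_n$, and the kernel bias bounded by $V_n^{-2s}$ via \eqref{assk} and \eqref{cond2}, followed by the identical balancing under the prescribed $V_n$ in both regimes. The only slight imprecision is your justification of $\E|\M_n(z)-\M(z)|^2\lesssim n^{-1}$ by boundedness of $X_k^{z-1}$, which holds only for $\uc\geq 1$ since $0<A_\infty\leq 1/\c$ bounds $X_k$ from above but not away from zero; for $\uc<1$ one argues, as the paper does, via Davydov's inequality and the finiteness of the moments $\E\bigl[X_0^{(\uc-1)r}\bigr]$, which does not change the argument.
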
 
\begin{proof}
	Proof is given in Section~\ref{uppmise}. 
\end{proof}

The next theorem shows that the rates obtained in the previous theorem are optimal up to a logarithmic factor. 

\begin{thm}[lower bounds for $\bar\nu_{n}$] 
\label{thm3} 
Fix some  \(s \in \N\cup\{0\},\) \(R>0,\) \(\alpha>0,\) \(\beta>0,\)  \(L>0\) and define 
\begin{eqnarray*}
\varphi_{n}(\pi):=\varphi_{n}(\pi,\rho)=
\begin{cases}
		n^{s/(2\beta+2s+3)}\log^{-\rho}(n) , &\text{if $\pi \in \mathcal{P}(\beta,L)$,}\\
		 \log^{ s}(n), &\text{if $\pi \in \mathcal{E}(\alpha,L)$,}
\end{cases}
\end{eqnarray*}
for any  \(\rho>0\) and any probability density \(\pi \in \mathcal{P}(\beta,L)\cup \mathcal{E}(\alpha,L), \) 
Then  for some \(\rho^{*}>0\), it holds 
\begin{eqnarray}
\label{lowerbound}
\inf_{\bar{\nu}_{n}} \sup_{\substack{\mathcal{T}\in\mg(s,R)\\ \pi_{\mathcal{T}} \in \mathcal{P}(\beta,L)\cup \mathcal{E}(\alpha,L)} }  
\left\{ 
\varphi_{n}^{2}(\pi_{\mathcal{T}}, \rho^{*})\cdot 
\E_{\pi^{\otimes n}_{\mathcal{T}}} \left[ 
	\int_{\R} \left|
		\bar{\nu}_{n}(x) - \bar\nu(x)
	\right|^{2}  dx
\right]
\right\} > 0,
\end{eqnarray}
where   the infimum is taken over all possible estimates \(\bar{\nu}_{n}\) of the function \(\bar\nu\) based on i.i.d. sample \(X_1,\ldots,X_n\) from the distribution \(\pi_{\mathcal{T}}\) of \(A_{\infty}:= \int_{0}^{\infty} e^{-\xi_{t}} \; dt\) such that the L\'evy triplet \(\mathcal{T}\) of \((\xi_t)\) belongs to \(\mg(s,R).\) 

\end{thm}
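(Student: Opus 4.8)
The plan is to derive \eqref{lowerbound} by reducing the estimation problem to a multiple hypothesis testing problem and applying Assouad's lemma. A two-point (Le~Cam) argument would only reproduce the exponent $2\beta+2s+2$ in place of $2\beta+2s+3$, so the construction must distribute a perturbation of total $L^{2}$-size $V_{n}^{-s}$ over $N\asymp V_{n}$ nearly independent ``wave packets'' concentrated around the frequency $V_{n}$; the extra factor $V_{n}$ that Assouad gains over Le~Cam accounts precisely for the missing unit in the exponent, and the residual logarithmic slack is absorbed into $\rho^{*}$.

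First I would fix a base triplet $\T_{0}=(\mu,0,\nu_{0})\in\mg(s,R)$ with $\mu>0$ whose exponential functional has density $\pi_{0}$ lying in $\mP(\beta,L)$ (resp.\ $\mE(\alpha,L)$); such bases are furnished by Example~\ref{exx} (a Gamma subordinator giving a Beta-type $\pi_{0}$ with $|\M(\uc+\i v)|\asymp|v|^{-\lambda/\mu}$) and by the first example (giving $\pi_{0}\in\mE(\pi/2,L)$, time-rescaled to realise the prescribed $\alpha$). I choose $\pi_{0}$ with $\liminf$ strictly larger than $L$ and $\|\bar\nu_{0}\|_{H^{s}}^{2}$ bounded away from $R$, leaving room for the perturbation. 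For $\omega\in\{0,1\}^{N}$ I then set $\bar\nu_{\omega}=\bar\nu_{0}+\gamma\sum_{j=1}^{N}\omega_{j}\psi_{j}$, where $\psi_{1},\dots,\psi_{N}$ are real, mutually orthogonal wave packets supported on a fixed compact subinterval of $\R_{+}$, each of unit $L^{2}$-norm, vanishing integral, and Fourier transform concentrated in the band $[V_{n},2V_{n}]$, with $\gamma\asymp V_{n}^{-s-1/2}$ and $N\asymp V_{n}$; the corresponding L\'evy densities are $\nu_{\omega}=e^{\uc x}\bar\nu_{\omega}$. I would verify that (i) $\nu_{\omega}\geq0$, is supported on $\R_{+}$, and $\nu_{\omega}(\R_{+})\leq R$ (using $\gamma\to0$ and $\int\psi_{j}=0$); (ii) $\int|v|^{2s}|\F[\bar\nu_{\omega}](v)|^{2}\,dv\leq R$, since $N\gamma^{2}V_{n}^{2s}\asymp1$; and (iii) $\pi_{\omega}\in\mP(\beta,L)$ (resp.\ $\mE(\alpha,L)$), because the perturbation alters the Mellin transform only near $|v|\asymp V_{n}$ and leaves its $|v|\to\infty$ asymptotics untouched. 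The $L^{2}$-separation is $\|\bar\nu_{\omega}-\bar\nu_{\omega'}\|_{2}^{2}=\gamma^{2}H(\omega,\omega')$, the packets carrying total squared norm $N\gamma^{2}\asymp V_{n}^{-2s}$.

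The analytic heart is to bound the per-coordinate information. Flipping one bit changes $\bar\nu$ by $\gamma\psi_{j}$, hence by \eqref{Mz}--\eqref{estproc} changes the Laplace exponent by $\delta\phi(z)=-\int_{0}^{\infty}e^{-(z-\uc)x}\gamma\psi_{j}(x)\,dx$, a quantity of size $\asymp\gamma V_{n}^{-1/2}$ localised near $\Im z\asymp V_{n}$ on the line $\Re z=\uc$ and geometrically damped on the shifted lines $\Re z=\uc+k$. Differentiating the product representation obtained by iterating \eqref{momenty}, namely $\log\M(z)=\sum_{k\geq0}[\log\phi(z+k)-\log(z+k)]+\mathrm{const}$, and using that the $k=0$ term dominates, I obtain the linearisation $\delta\M(\uc+\i v)\approx\M(\uc+\i v)\,\delta\phi(\uc+\i v)/\phi(\uc+\i v)$. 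Since $|\phi(\uc+\i v)|\asymp|v|$ and $|\M(\uc+\i v)|\asymp|v|^{-\beta}$ (resp.\ $e^{-\alpha|v|}$) on the band, Mellin--Plancherel yields the per-bit divergence
\[
\chi^{2}(\pi_{\omega},\pi_{\omega'})\lesssim\int_{\R}\bigl|\delta\M(\uc+\i v)\bigr|^{2}\,dv\asymp\gamma^{2}V_{n}^{-2\beta-2}\quad(\text{resp. }\gamma^{2}e^{-2\alpha V_{n}}V_{n}^{-2}).
\]
Requiring $n\,\chi^{2}\lesssim1$, the choice $V_{n}=n^{1/(2\beta+2s+3)}$ (resp.\ $V_{n}=\tfrac{1}{2\alpha}\log n-\tfrac{s+2}{\alpha}\log\log n$) keeps the per-bit Kullback--Leibler divergence bounded, so Assouad's lemma gives
\[
\inf_{\bar\nu_{n}}\max_{\omega}\E_{\pi_{\omega}^{\otimes n}}\bigl\|\bar\nu_{n}-\bar\nu_{\omega}\bigr\|_{2}^{2}\gtrsim N\gamma^{2}\asymp V_{n}^{-2s}\asymp n^{-2s/(2\beta+2s+3)}\ (\text{resp. }\log^{-2s}n),
\]
which matches $\varphi_{n}^{-2}$ in the sense of the theorem up to the logarithmic factor recorded in $\rho^{*}$.

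The main obstacle is the passage from the $L^{2}$-bound $\int|\delta\M|^{2}\,dv$ to a genuine bound on $\chi^{2}(\pi_{\omega},\pi_{\omega'})$: because $A_{\infty}\leq1/\mu$ a.s., the density $\pi_{0}$ is supported on $[0,1/\mu]$ and may vanish (or blow up) at the endpoints, so the weight $1/\pi_{0}$ in $\chi^{2}=\int(\delta\pi)^{2}/\pi_{0}$ is not uniformly comparable to the Mellin weight $x^{2\uc-1}$. I would resolve this by (a) choosing $\uc$ (and aligning $u^{0}$) so that $x^{2\uc-1}\gtrsim1/\pi_{0}(x)$ near $x=0$, and (b) exploiting that $\pi_{\omega}$ and $\pi_{0}$ share the same leading endpoint expansion, since the drift $\mu$ and the mass $\lambda$ are left unperturbed, so that $\delta\pi$ vanishes at $1/\mu$ at least as fast as $\pi_{0}$; alternatively one replaces $\chi^{2}$ by the Hellinger distance, which is more forgiving at the boundary. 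The remaining technical points---rigorous control of the linearisation remainder and of the tail $\sum_{k\geq1}$ in the product formula, and the near-orthogonality of the packets both in $H^{s}$ and under the nonlinear map $\bar\nu\mapsto\phi\mapsto\M\mapsto\pi$---are routine once the packets are taken with disjoint compact supports and $\gamma\to0$.
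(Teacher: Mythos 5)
Your overall strategy is genuinely the same as the paper's in its essentials --- a hypercube of small perturbations of the L\'evy measure, frequency-localised at $|v|\asymp V_{n}$ with $V_{n}=n^{1/(2\beta+2s+3)}$, information controlled through the Mellin recursion $\M(z)=\frac{\phi(z)}{z}\M(z+1)$ --- and your rate bookkeeping (per-bit $\chi^{2}\lesssim\gamma^{2}V_{n}^{-2\beta-2}$, separation $N\gamma^{2}\asymp V_{n}^{-2s}$, and the observation that the multi-hypothesis step supplies the extra unit in the exponent over a two-point bound) checks out. The differences are real but complementary: the paper uses Tsybakov's Theorem~2.7 together with the Varshamov--Gilbert bound ($M>2^{L/8}$ vectors at Hamming distance $\geq L/8$) rather than Assouad, and instead of spatially disjoint packets in one frequency band it modulates a single profile $g_{0}(x)=x^{-3/2}e^{-1/x}$ at $L$ \emph{distinct} frequencies $k\gamma_{L}$, $k=L+1,\ldots,2L$, with $\gamma_{L}=c\log^{2}(L)$; your disjoint supports buy exact $L^{2}$-orthogonality of the separation, while the paper's frequency separation makes all cross terms $R_{L}$ negligible in every weighted norm at once.

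The genuine gap is in the step you dismiss as routine. Your linearisation rests on the representation $\log\M(z)=\sum_{k\geq0}[\log\phi(z+k)-\log(z+k)]+\mathrm{const}$, which is divergent as written: $\phi(z+k)/(z+k)\to\mu$ as $k\to\infty$, so the series of logarithms does not converge unless $\mu=1$, and even then only conditionally. The correct object is the \emph{ratio} $\M_{\omega}(z)/\M_{0}(z)=\prod_{k\geq0}\bigl(\phi_{\omega}(z+k)/\phi_{0}(z+k)\bigr)$, and making this product and its linearisation rigorous is precisely where the paper invests its cleverest construction: the perturbation of $\nu_{0}$ is taken of the engineered form $\Delta_{\theta}=\bigl(g_{\theta}+a(g_{\theta}\star e^{-b\cdot})\bigr)'$ so that $\phi_{\theta}(z)/\phi_{0}(z)-1=\delta\,\mathcal{L}[g_{\theta}](z)$ \emph{exactly}, with $\mathcal{L}[g_{0}](z)=\sqrt{\pi}\,e^{-2\sqrt{z}}$ giving explicit super-exponential decay of the factors in both $\Re(z)$ and $\Im(z)$, whence the clean bound $\left|\M_{\theta}(u+\i v)/\M_{0}(u+\i v)-1\right|\leq c\,\delta\left|\mathcal{L}[g_{\theta}](u+\i v)\right|$. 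With your generic packets the product still converges geometrically (the supports stay away from $0$), but the uniform control of the remainder, around every $\pi_{\omega}$ as Assouad requires, is the analytic heart of the proof and must be carried out, not deferred. Relatedly, your ``main obstacle'' --- the weight $1/\pi_{0}$ in $\chi^{2}$ versus the Mellin--Plancherel weight --- is resolved in the paper not by tuning $\uc$ but by a concrete choice of base model: the Beta-type density $\pi_{0}(x)\propto x^{b}(1-x)^{a-1}$ with $0<a\leq1$, $0<b<1$, so that $1/\pi_{0}(x)\lesssim x^{-b}$ on $(0,1)$, after which Parseval for Mellin transforms is applied on the line $\Re(z)=(1-b)/2$. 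Your remedies (a)--(b) gesture in this direction but would need this level of concreteness to close.

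Two smaller points. First, positivity of $\nu_{\omega}$ for $s=0$: with $N\asymp V_{n}$ disjoint packets of unit $L^{2}$-norm on a fixed interval, $\|\psi_{j}\|_{\infty}\gtrsim N^{1/2}$, so $\gamma\|\psi_{j}\|_{\infty}\asymp V_{n}^{-s}$, which does \emph{not} tend to zero when $s=0$; you need $\gamma=c_{0}V_{n}^{-s-1/2}$ with $c_{0}$ small and $\bar\nu_{0}$ bounded below on the packet region, rather than ``$\gamma\to0$''. Second, $\int\psi_{j}=0$ does not give $\int e^{\uc x}\psi_{j}(x)\,dx=0$, so the bound $\nu_{\omega}(\R_{+})\leq R$ needs the oscillation at frequency $\asymp V_{n}$ to make these integrals $O(\gamma V_{n}^{-1})$ each. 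Finally, note that the paper itself only details the polynomial case $\mathcal{P}(\beta,L)$, so your sketch of the exponential case (where realising a prescribed decay rate $\alpha$ by a base model is not achieved by time-rescaling, which only changes multiplicative constants in the Mellin transform, not the exponential rate) is no less complete than the published argument, but equally unfinished.
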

\begin{proof}
	Proof is given in Section~\ref{prooflb}. 
\end{proof}

An important condition of Theorems ~\ref{corub} and \ref{nuub}  is \eqref{expcond}, which means that the sequence \(X_{0}, X_{1},\ldots, X_{n}\) is exponentially \(\alpha\)-mixing. Since \(\beta\)-mixing coefficient  between two sigma-algebras is larger than or equal to the corresponding \(\alpha\)-mixing coefficient, it is sufficient to show that \(X_{0}, X_{1},\ldots, X_{n}\) is an exponentially \(\beta\)-mixing sequence (see Section~1.1 from \cite{Bosq}). For the case of the GOU processes \eqref{GOU}, the latter question  was addressed in \cite{Fasen}. The sufficient conditions for exponential \(\beta\)-mixing given in \cite{Fasen} are: 
\begin{enumerate}
\item the distribution of \(A_{\infty}\) has a Pareto-like asymptotic behaviour, that is,
\begin{eqnarray*}
\P \left\{ 
	A_{\infty}>x 
\right\}
\asymp C x^{-\alpha}
\quad \mbox{as}\quad x \to \infty
\end{eqnarray*}
with some \(\alpha>0\) and  \(C>0;\)
\item there exist \(A>0\), \(B>A\) and \(h>0\) such that \(\psi(A)=0, \psi(B)<\infty\) with \(\psi\) given in \eqref{phis}, and 
\begin{eqnarray*}
\E \left| 
e^{-\xi_{h}}\int_{0}^{h}e^{\xi_{u-}} du
\right|^{B} <\infty.
\end{eqnarray*}
\end{enumerate}
As it is proved in \cite{LindnerMaller}, both conditions are guaranteed by the positiveness of \(\mu\) and the existence of a positive zero of the function \(\psi(\cdot)\). We refer also to  \cite{Lee} for some further results in this direction.

\section{Simulation study} 
\label{secsim}

\sec{Example 1.} 
Consider the subordinator \(\tau_{t}\) with the L{\'e}vy density 
\begin{eqnarray}
\label{nu}
\nu(x)= ab \exp\{-bx\}\: \I\{x>0\}, \quad a, b >0.
\end{eqnarray}
Note that in this case, \(\lambda=\int_{\R_{+}} \nu (u) du = a\).  
Define a L\'evy process 
\begin{eqnarray}
\label{xiex}
\xi_{t}=\mu t + \sigma W_{t} + \tau_{t},
\end{eqnarray}
where \(W_{t}\) is a Brownian motion.  The Laplace exponent of \(\xi_{t}\) is given by
\begin{eqnarray}
\label{phiss2}
\phi(z) = z \left(
		\c - \frac{1}{2} \sigma^{2} z + \frac{a}{b+z}
	\right).
\end{eqnarray}

 In \cite{CPY}, it is shown that the exponential functional \(A_{\infty}=\int_{0}^{\infty} e^{-\xi_{t}} \; dt\) is finite for any \(\mu\) and \(\sigma\), and moreover  the density function \(\pi\) of \(A_{\infty}\) satisfies the following differential equation 
\begin{multline}
 -\frac{\sigma^{2}}{2} x^{2} \pi''(x) + 
 \left[
 	\left(
		\frac{\sigma^{2}}{2} (3-b) +\c
	\right)
	x
	-1
\right] \pi'(x) \\
+
\left[
	\left( 
		1-b
	\right)
	\left(
	 	\frac{\sigma^{2}}{2} + \c
	\right)
	-a+\frac{b}{x}
 \right] \pi(x) = 0.
 \label{eq}
\end{multline}
Some special cases are considered below:
\begin{enumerate}
\item  In the case \(\c=0, \: \sigma=0\) (pure jump  process), this equation has a solution
\begin{eqnarray}
\label{kx1}
	\pi_{1}(x) = C x^{b} e^{-a x} \: I\{x>0\},
\end{eqnarray}
and therefore \(A_{\infty} \eqd  G(b+1, a)\), where \(G(\alpha, \beta)\) is  a Gamma distribution with shape parameter \(\alpha\)  and rate \(\beta\). 
\item If \(\c>0, \: \sigma=0\)  (pure jump process with drift), then 
\begin{eqnarray}
\label{kx2}
	\pi_{2}(x) = C x^{b} (1- \c x)^{(a/\c) -1 } \: I\{0< x<1/\c\}.
\end{eqnarray}
In this situation \(A_{\infty} \eqd  B(b+1, a/\c) / \c\), where \(B(\alpha, \beta)\) is a Beta - distribution.
\item In the case \(\c \ne 0, \: \sigma \ne 0\),  the equation \eqref{eq} also allows for the closed form solution. Assuming for simplicity \(\sigma^{2}/2 =1\), \(\c=-(b+1)\), we get the solution of \eqref{eq} in the following form:
\begin{eqnarray}
\label{kx}
	\pi_{3}(x)=C\: x^{b-1/2} \exp\left\{\frac{1}{2x}\right\} I_{\mu}\left(\frac{1}{2x}\right),
\end{eqnarray}
where we denote by \(I_{\mu}\) the modified Bessel function of the first kind, \(\mu=\sqrt{a+1/4}\), and the constant \(C\) is later chosen to guarantee the condition \(\int_{0}^{\infty}\pi_{3}(x) dx =1\). \vsp
\end{enumerate}
For our numerical study, we assume that the data are generated from  the distribution of \eqref{Ainfty}, where the process \((\xi_{t})\) is defined by \eqref{xiex} with 
\(\c=1.8, \sigma=0,\)  and the subordinator \(\tau_{t}\) in the form \eqref{nu} with \(a=0.7\), \(b=0.2\). A sample from the distribution of the integral \(A_{\infty}\) can be simulated from the corresponding Beta-distribution, see \eqref{kx2}. 
In the first step, we estimate the Mellin transform \(\M(z)\)  for \(z=u+\ii v\) with \(u=\uc=29\) and \(u=\uc+1=30\) and  \(v\) lying on the equidistant grid between \(-30\) and \(30\). Next, we estimate the Laplace exponent of \(\xi\) by the formula \eqref{step2}. Figure~\ref{plot1} graphically compares the proposed estimator of the Laplace exponent \(\phi(\uc+\i v)\) with its theoretical values \((\mu+a/(b+\uc+\i v))\cdot \left(\uc+\i v\right)\) . 
\begin{figure}
\begin{center}
\includegraphics[width=0.6\linewidth ]{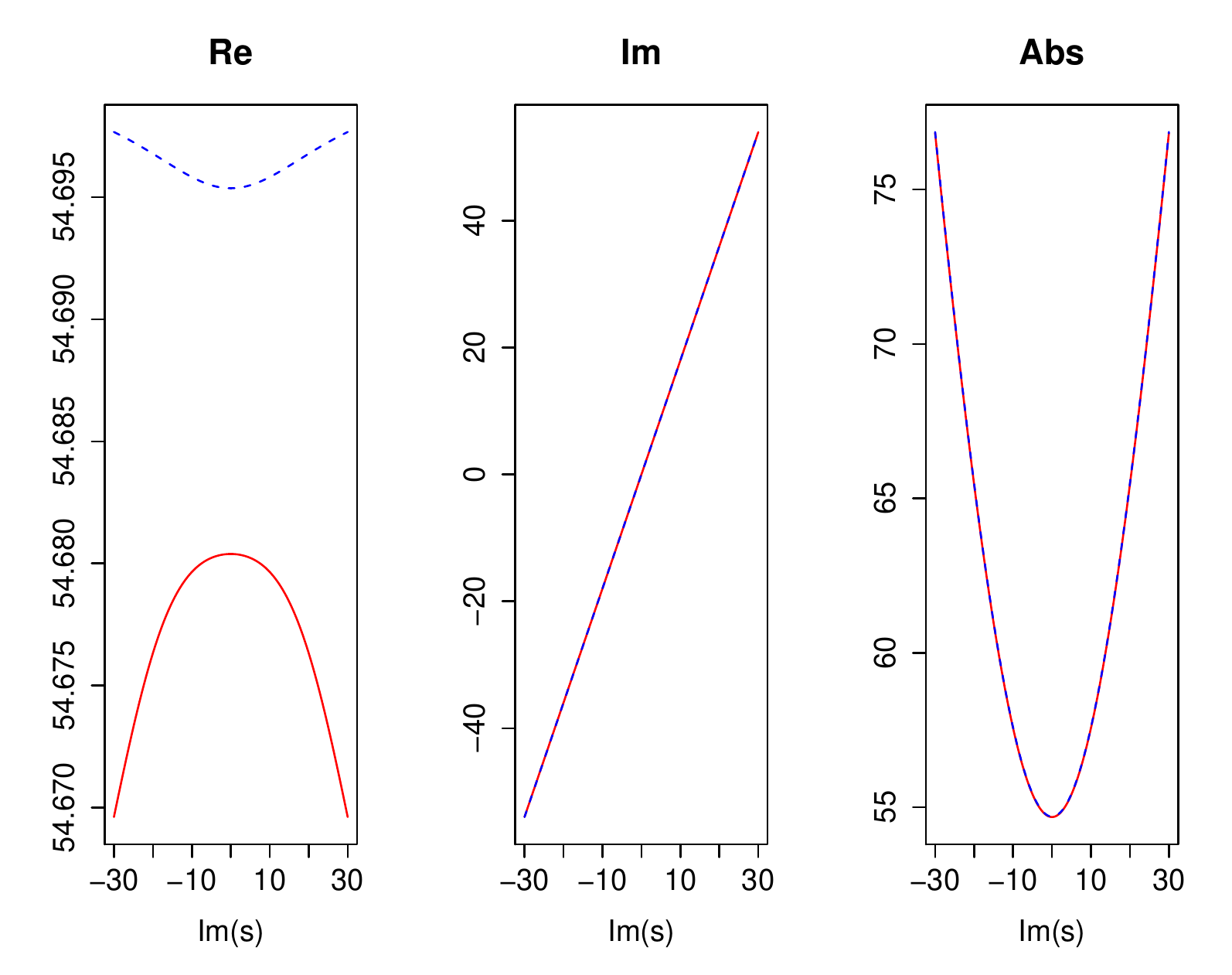}\caption{Plots of theoretical (blue dashed) and empirical (red solid) Laplace exponents in Example 1. Real, imaginary parts and absolute values are presented.   \label{plot1}}
\end{center}
\end{figure}

\begin{figure}
\begin{center}
\includegraphics[width=0.6\linewidth ]{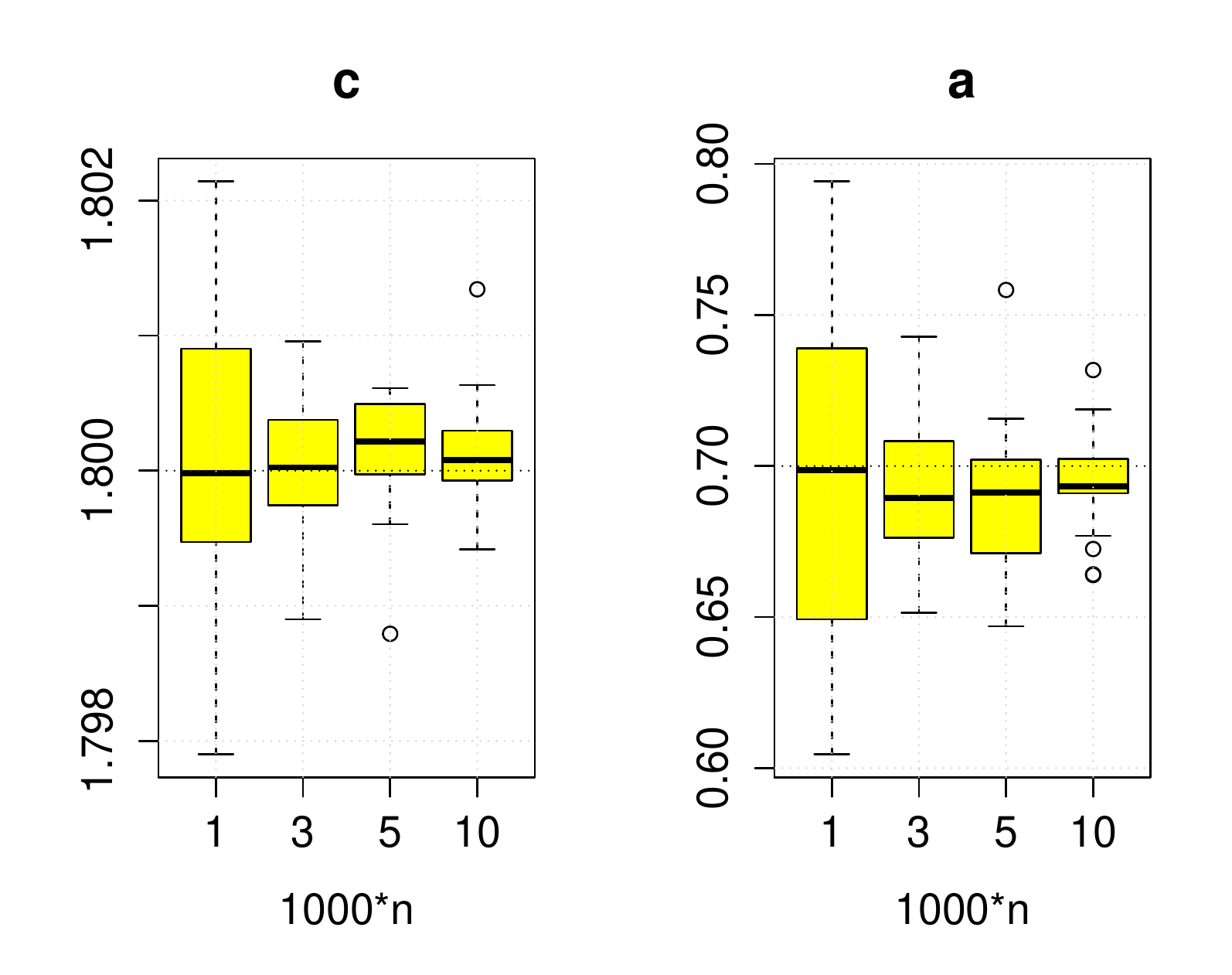}\caption{Boxplots for the estimates of \(\c=c\) and \(\a=a\) for different sample sizes \(n\) based on 25 simulation runs.\label{plot2}}
\end{center}
\end{figure}
Estimates for the parameters \(\c\) and \(\a=a\) are given in   \eqref{hatc} and \eqref{hata}, respectively. The boxplots of this estimates based on 25 simulation runs are presented on Figure \ref{plot2}. \vsp

\sec{Example 2.}  Consider the compound Poisson process 
\[
	\xi_{t} = -\log q \left(\sum_{k=1}^{N_{t}} \eta_{k} \right),
\]
where \(q \in (0,1)\) is fixed, \(N_{t}\) is a Poisson process with intensity \(\lambda\) and \(\eta_{k}\) are i.i.d. random variables with a distribution \(\L\). The integral \(A_{\infty}\) admits the representation 
\begin{eqnarray*}
	A_{\infty} = \int_{0}^{\infty} q^{-\xi_{t}} dt = \sum_{n=0}^{\infty} q^{S_{n}} \left(T_{n+1} - T_{n}\right),
\end{eqnarray*}
where \(T_{n}\) is the jump time of \(N,\) i.e., \(T_{n} = \inf\left\{ t: N_{t} =n \right\}\), and \(S_{n}=\sum_{k=1}^{n}\eta_{k}.\) Note that if \(\eta_{k}\) take only positive values,  then \(\xi_{t}\) is a subordinator. For the overview of the properties of the integral \(A_{\infty}\) in the particular case \(\eta_{k} \equiv 1\) (that is, \(\xi_{t}\) is a Poisson process up to a constant), we refer to \cite{BertoinYor}.

Fix some positive \(\alpha\) and consider the case when \(\L\) is  the standard normal distribution truncated on the interval \((\alpha, +\infty)\).  The density function of \(\L\) is given by \[p_{\L} (x)= p(x) / (1-F(\alpha)),\] where \(p(\cdot)\) and \(F(\cdot)\) are the density and the distribution functions of the standard Normal distribution. In this case, the Laplace exponent  of \(\xi_{t}\) is equal to 
\begin{eqnarray*}
	\phi(z) = \lambda \left[
		1 -  \frac{	
			1-F\left(\alpha + (\log q) z\right)
		}
		{
			1 - F\left(\alpha\right)
		}\;
		\exp \left\{
		-
			\frac{
				\left(
					\log q 
				\right)^{2}
				z^{2}
			}{2}
		\right\}
	\right],
\end{eqnarray*}
where the function \(F(\cdot)\) can be calculated for complex arguments from the error function: 
\[
F(z) := \frac{1}{2} \left( \erf\left( \frac{z}{\sqrt{2}} \right) + 1 \right), \quad \mbox{where} \quad 
\erf(z) = \frac{2}{\sqrt{\pi}}\int_{0}^{z} e^{-s^{2}} ds.
\]
In this example, we aim to  estimate the L{\'e}vy measure of the process \(\-(\xi_{t})\), which is given by
\[
\nu(dx)  = \frac{\lambda}{1-F(\alpha)} \: p(x)  \I\{x>\alpha\} dx.
\]
For our numerical study, we take \(q=0.5, \alpha=0.1\), and \(\lambda=1\). First, we estimate the Laplace exponent by  \eqref{step2}. The quality of the corresponding estimate  at the complex points \(z=\uc+\ii v\) with \(\uc=1\) and \(v \in [-5,5]\) can be visually seen in Figure~\ref{fig3}. 
\begin{figure}
\begin{center}
\includegraphics[width=0.6\linewidth ]{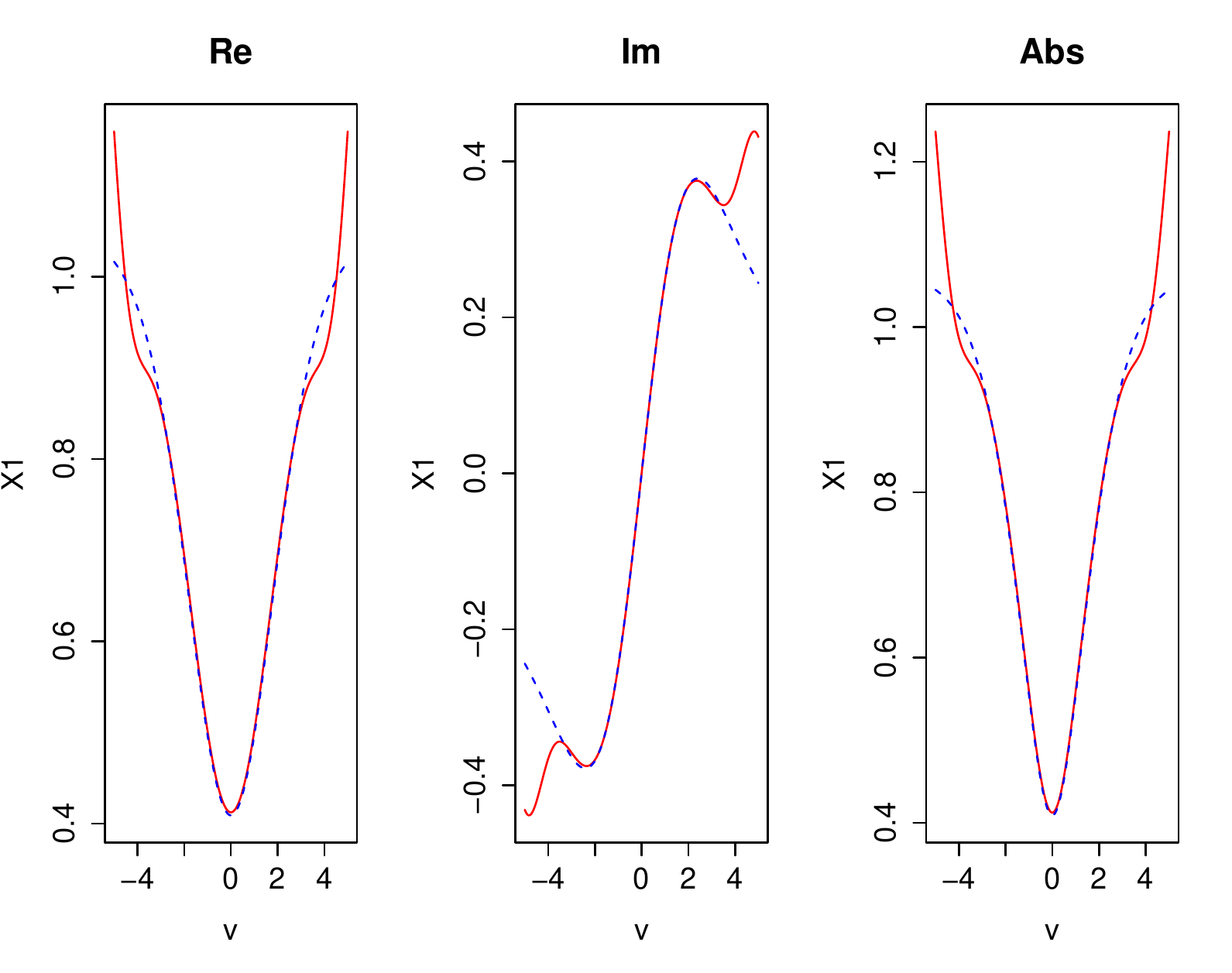}\caption{Plots of theoretical (blue dashed) and empirical (red solid) Laplace exponents for Example 2. Graphs present real, imaginary and absolute values. For  \(v \in[-3,3]\) the curves are visually indistinguishable. \label{fig3}}
\end{center}
\end{figure}
\begin{figure}
\begin{center}
\includegraphics[width=0.6\linewidth ]{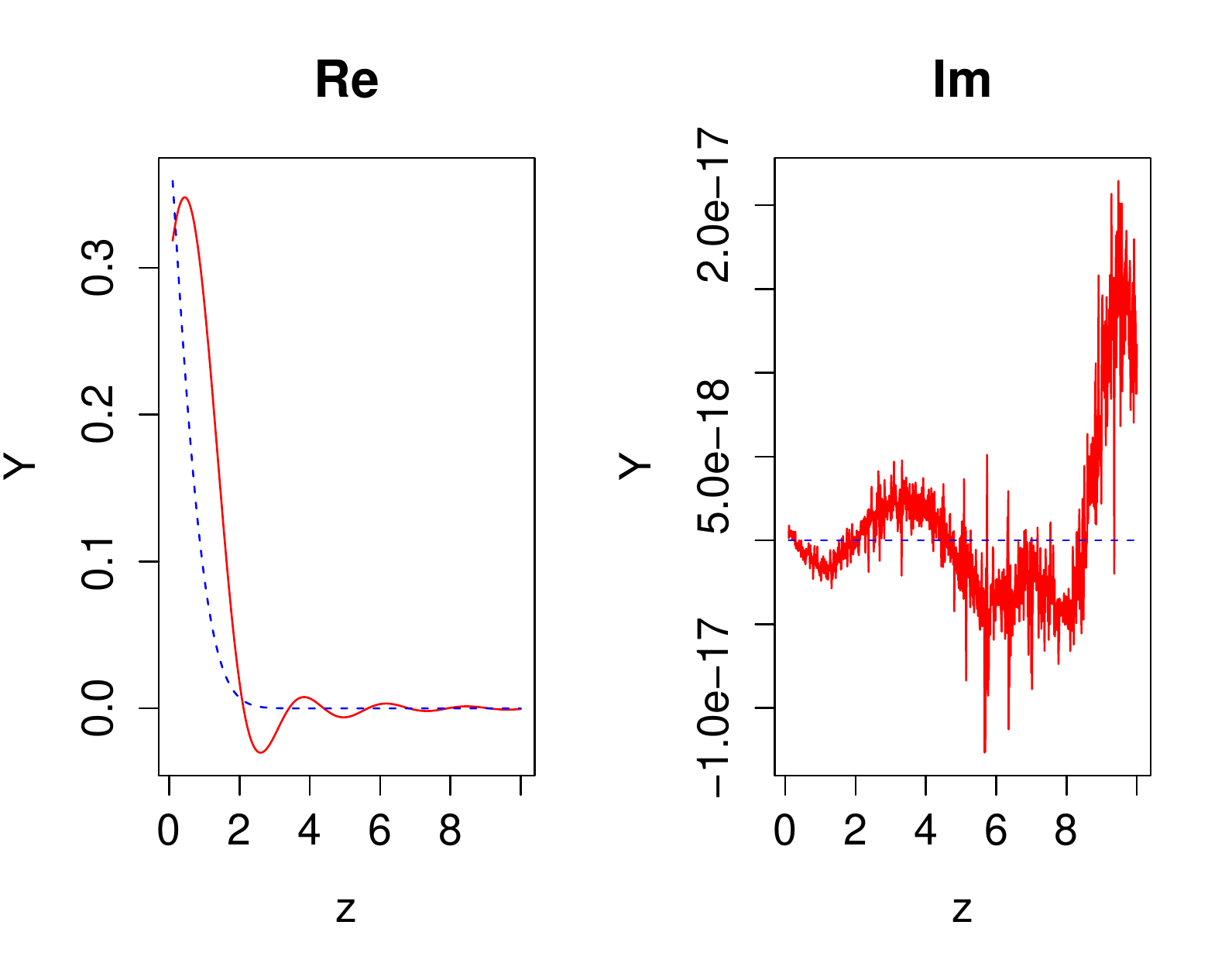}\caption{Left: plots of the L{\'e}vy density  (blue dashed line) and its estimate \(\bar{\nu}_n(z)\) (red solid). Right: the imaginary part of the estimate \(\bar{\nu}_n(z)\) (red solid) and the line \(Y=0\) (blue dashed line). \label{fig4}}
\end{center}
\end{figure}
Next, we proceed with the estimation of the Fourier transform of the measure \(\bar{\nu}(x):=e^{-\uc x} \nu(x)\)
 by applying \eqref{step5}. For the last step of the Algorithm~2, i.e. the reconstruction of the L{\'e}vy measure by \eqref{step6},  we follow  \cite{Belomest2011}  and use the so-called flat-top kernel, which is defined as follows:	
\begin{eqnarray*}
\K(x)=
\begin{cases}
1, & |x|\leq 0.05, \\
\exp\left( -\frac{e^{-1/(|x|-0.05)}}{1-|x|} \right), & 0.05<|x|<1,\\
0, &  |x|\geq 1.
\end{cases}
\end{eqnarray*}
The quality of the resulting  estimate \(\bar{\nu}_n\)  is shown in Figure~\ref{fig4}.

\section{Proofs}\label{theory}
\subsection{Upper bounds for the quadratic risks of  $\mu_{n}$ and $\lambda_{n}$} \label{upp} 
\label{upplmu}

The next proposition is the main technical result for this section.
\begin{prop}
\label{propub} 
Let \(\xi_{t}\) be a L{\'e}vy triplet from ${\cal G} (s,R)$. Suppose that the sequence  $X_{0},X_{1},\ldots,X_{n}$ of observations of the exponential functional \(A_{\infty}:=\lim_{T\to\infty} A_{T}= \int_{0}^{\infty} e^{-\xi_{t}} \; dt\)  is $\alpha$-mixing  and strictly stationary.  Denote the mixing coefficients of the sequence $X_{0},X_{1},\ldots,X_{n}$  by \(\alpha(s)\).

Then for
any $p\in\{0,1,\ldots,n\}$ we have 
\begin{multline*}
\mathrm{E}\left[\left|\mu_{n}-\mu\right|^{2}\right]  \lesssim 
\frac{p}{n}\int_{0}^{\infty}\frac{|u^{\circ}+\i v|^{2}}{\left|\M(u^{\circ}+\i v+1)\right|^{2}}\left|w_{\mu,n}(v)\right|^{2}\, dv\\
+\sum_{j=p+1}^{n}\alpha(j)\left[\int_{0}^{\infty}\frac{\left|u^{\circ}+\i v\right|\left|w_{\mu,n}(v)\right|}{\left|\M_{n}(u^{\circ}+\i v+1)\right|}\, dv\right]^{2}
\\
  +  \|\bar{\nu}^{(s)}\|_{\infty}^{2}\|\mathcal{F}^{-1}[w_{\mu,n}(\cdot)/(-i\cdot)^{s}]\|_{L^{1}}^{2},
  \end{multline*}
  \begin{multline*}
\mathrm{E}\left[\left|\lambda_{n}-\lambda\right|^{2}\right]  \lesssim  \frac{p}{n}\int_{0}^{\infty}\frac{|u^{\circ}+\i v|^{2}}{\left|\M(u^{\circ}+\i v+1)\right|^{2}}\left|w_{\lambda,n}(v)\right|^{2}\, dv\\+\sum_{j=p+1}^{n}\alpha(j)\left[\int_{0}^{\infty}\frac{\left|u^{\circ}+\i v\right|\left|w_{\lambda,n}(v)\right|}{\left|\M_{n}(u^{\circ}+\i v+1)\right|}\, dv\right]^{2}\\
+ \|\bar{\nu}^{(s)}\|_{\infty}^{2}\|\mathcal{F}^{-1}[w_{\lambda,n}(\cdot)/(-i\cdot)^{s}]\|_{L^{1}}^{2},
\end{multline*}
provided $\sum_{j=1}^{\infty}\alpha^{1-\epsilon}(j)<\infty$ for some
$\epsilon>0$ and the sequence $V_{n}$ satisfies 
\begin{equation}
\sup_{v\in[0,V_{n}]}\frac{1}{\left|\M(u^{\circ}+\i v+1)\right|}=o(n^{1/2}).\label{eq:Vn_cond}
\end{equation}
\end{prop}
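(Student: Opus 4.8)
The plan is to treat $\mu_{n}$ and $\lambda_{n}$ through the same bias--variance decomposition, so I describe $\mu_{n}$ in detail and indicate the (identical) modifications for $\lambda_{n}$ at the end. The normalisation of the weights gives $\int_{0}^{\infty}v\,w_{\mu,n}(v)\,dv=1$, so taking imaginary parts in \eqref{estproc} yields
\[
\mu=\int_{0}^{\infty}w_{\mu,n}(v)\,\Im\bigl[\phi(\uc+\i v)\bigr]\,dv+\int_{0}^{\infty}w_{\mu,n}(v)\,\Im\F[\bar\nu](-v)\,dv .
\]
Subtracting this from the definition of $\mu_{n}$ splits the error into a stochastic part $\int_{0}^{\infty}w_{\mu,n}(v)\,\Im[Y_{n}(\uc+\i v)-\phi(\uc+\i v)]\,dv$ and a deterministic bias $\int_{0}^{\infty}w_{\mu,n}(v)\,\Im\F[\bar\nu](-v)\,dv$, whence $\E|\mu_{n}-\mu|^{2}\lesssim(\text{variance of the stochastic part})+(\text{bias})^{2}$. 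The bias produces the third term in the asserted bound.

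For the bias I would exploit the smoothness encoded in $\mathcal{G}(s,R)$: since $\bar\nu$ is $s$-times weakly differentiable by \eqref{cond3}, one has $\F[\bar\nu](-v)=(\i v)^{-s}\F[\bar\nu^{(s)}](-v)$. Substituting this and interchanging the order of integration (Fubini, justified since $w_{\mu,n}$ is compactly supported and $\bar\nu^{(s)}$ is bounded) gives
\[
|\text{bias}|\le\Bigl|\int_{\R}\bar\nu^{(s)}(x)\,\F^{-1}[w_{\mu,n}(\cdot)/(-\i\cdot)^{s}](x)\,dx\Bigr|\le\|\bar\nu^{(s)}\|_{\infty}\,\bigl\|\F^{-1}[w_{\mu,n}(\cdot)/(-\i\cdot)^{s}]\bigr\|_{L^{1}},
\]
exactly the third summand.

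The heart of the argument is the variance of the stochastic part. Writing $\Delta_{n}(z):=\M_{n}(z)-\M(z)=\frac1n\sum_{k=1}^{n}(X_{k}^{z-1}-\E X_{k}^{z-1})$ for the centred empirical Mellin fluctuation and using $\phi(z)/z=\M(z)/\M(z+1)$ from \eqref{Mz}, one has the exact identity
\[
Y_{n}(z)-\phi(z)=\frac{z}{\M_{n}(z+1)}\Bigl[\Delta_{n}(z)-\tfrac{\M(z)}{\M(z+1)}\Delta_{n}(z+1)\Bigr],
\]
which avoids any Taylor remainder at the price of the random denominator $\M_{n}(z+1)$; note $|\M(z)/\M(z+1)|=|\phi(z)/z|$ is bounded, so the prefactor behaves like $z/\M_{n}(z+1)$. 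Substituting into the stochastic part expresses it as a centred functional of the strictly stationary $\alpha$-mixing sequence $(X_{k})$, and its second moment is estimated by expanding the covariances of the summands and splitting the double sum at lag $p$: the $2p+1$ near-diagonal lags are bounded crudely by the single-term variance, producing $\tfrac{p}{n}\int_{0}^{\infty}|\uc+\i v|^{2}|\M(\uc+\i v+1)|^{-2}|w_{\mu,n}(v)|^{2}\,dv$ (after replacing $\M_{n}(z+1)$ by $\M(z+1)$, see below), while the far lags $j>p$ are controlled by the $\alpha$-mixing covariance inequality (legitimate because $\sum_{j}\alpha^{1-\epsilon}(j)<\infty$), yielding $\sum_{j=p+1}^{n}\alpha(j)\bigl[\int_{0}^{\infty}|\uc+\i v|\,|w_{\mu,n}(v)|\,|\M_{n}(\uc+\i v+1)|^{-1}\,dv\bigr]^{2}$. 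The argument for $\lambda_{n}$ is verbatim with $w_{\lambda,n}$ and real parts in place of imaginary parts; the extra $-\mu_{n}\uc$ correction is absorbed since $|\mu_{n}-\mu|$ is already controlled.

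I expect the main obstacle to be the random, and small, denominator $\M_{n}(\uc+\i v+1)$, which must be handled uniformly over the growing range $v\in[0,V_{n}]$ on which $|\M(\uc+\i v+1)|$ decays (polynomially in the $\mathcal{P}$-case, exponentially in the $\mathcal{E}$-case). Condition \eqref{eq:Vn_cond}, namely $\sup_{v\in[0,V_{n}]}|\M(\uc+\i v+1)|^{-1}=o(n^{1/2})$, is precisely the quantitative statement ensuring that the $O(n^{-1/2})$ fluctuation of $\M_{n}$ is dominated by the deterministic $\M$, so that on a high-probability event $|\M_{n}(\uc+\i v+1)|\gtrsim|\M(\uc+\i v+1)|$ uniformly in $v$; this justifies replacing $\M_{n}$ by $\M$ in the near-diagonal term and renders the contribution of the complementary event negligible. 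Turning this uniform control into the clean two-term bound, while tracking the interplay between the decay of $\M$ and the cutoff $p$, is the delicate part of the proof.
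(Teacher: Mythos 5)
Your proposal follows essentially the same route as the paper's proof: the same bias--variance split with the bias bounded via Plancherel exactly as in the paper, and your exact identity $Y_{n}(z)-\phi(z)=\frac{z}{\M_{n}(z+1)}\bigl[\Delta_{n}(z)-\frac{\M(z)}{\M(z+1)}\Delta_{n}(z+1)\bigr]$ is algebraically identical to the paper's decomposition $S_{n}(1+R_{2,n})=-Y\cdot R_{2,n}+R_{1,n}$, after which both arguments split the covariance sum at lag $p$, invoke Davydov's inequality for the far lags, and use condition \eqref{eq:Vn_cond} to replace the random denominator $\M_{n}(\uc+\i v+1)$ by $\M(\uc+\i v+1)$ (the paper phrases this as $\sup_{v\in[0,V_{n}]}|R_{2,n}|^{2}=o_{P}(1)$, with the same level of detail on uniformity that you flag as delicate). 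The proof is correct and matches the paper's argument.
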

\begin{proof}
\textbf{1.} Denote $Y(z) :=\phi(z) = z \cdot \M(z) / \M(z+1),$ then 
\[\mu=\int_{0}^{\infty}w_{\mu,n}(v)\Im\left[Y(u^{\circ}+\i v)\right]\, dv + \int_{0}^{\infty}w_{\mu,n}(v)\mathrm{Im}[\mathcal{F}[\bar{\nu}](-v)]\, dv\]
and we have 
\begin{eqnarray*}
\mu_{n}-\mu & = & \int_{0}^{\infty}w_{\mu,n}(v)\mathrm{Im}\left[Y_{n}(u^{\circ}+\i v)-Y(u^{\circ}+\i v)\right]\, dv\\
&& \hspace{4cm} - \int_{0}^{\infty}w_{\mu,n}(v)\mathrm{Im}[\mathcal{F}[\bar{\nu}](-v)]\, dv\\
 & = & \mathrm{Im}\left[\int_{0}^{\infty}w_{\mu,n}(v)S_{n}(u^{\circ}+\i v)\, dv\right] - \mathrm{Im}[D_{n}(u^{\circ})]
\end{eqnarray*}
with 
\[
S_{n}(u^{\circ}+\i v)=Y_{n}(u^{\circ}+\i v)-Y(u^{\circ}+\i v),\quad  D_{n}(u^{\circ})=
\int_{0}^{\infty}w_{\mu,n}(v) \mathcal{F}[\bar{\nu}](-v) dv.
\]
Note that 
\begin{eqnarray*}
\E \left[ \left(
\mu_{n}-\mu
\right)^{2}
\right] \leq 2 \cdot
\E\left[
	\left(
		\mathrm{Im}\left[\int_{0}^{\infty}w_{\mu,n}(v)S_{n}(u^{\circ}+\i v)\, dv\right]
	\right)^{2}
\right]
+
2
\left|
	D_{n}(u^{\circ})
\right|^{2}.
\end{eqnarray*}

\textbf{2.} Since 
\begin{eqnarray*}
\frac{S_{n}(z)}{z} & = & \frac{\M_{n}(z)}{\M_{n}(z+1)}-\frac{\M(z)}{\M(z+1)}\\
 & = & \frac{\M_{n}(z)\M(z+1)-\M(z)\M_{n}(z+1)}{\M_{n}(z+1)\M(z+1)}\\
 & = & \frac{\left[\M_{n}(z)-\M(z)\right]\M_{n}(z+1)-\left[\M_{n}(z+1)-\M(z+1)\right]\M_{n}(z)}{\M_{n}(z+1)\M(z+1)}\\
 & = & \frac{\left[\M_{n}(z)-\M(z)\right]}{\M(z+1)}-\frac{Y_{n}(z)}{z}\,\frac{\left[\M_{n}(z+1)-\M(z+1)\right]}{\M(z+1)}\\
 & = & \frac{\left[\M_{n}(z)-\M(z)\right]}{\M(z+1)}-\frac{S_{n}(z)}{z}\,\frac{\left[\M_{n}(z+1)-\M(z+1)\right]}{\M(z+1)}\\
 &  & -\frac{Y(z)}{z}\,\frac{\left[\M_{n}(z+1)-\M(z+1)\right]}{\M(z+1)},
\end{eqnarray*}
we get
\[ 
S_{n} \cdot (1+R_{2,n} )=-Y\cdot R_{2,n} +R_{1,n}
\]
with 
\[
R_{1,n}(z)=z\,\frac{\left[\M_{n}(z)-\M(z)\right]}{\M(z+1)},\quad R_{2,n}(z)=\frac{\left[\M_{n}(z+1)-\M(z+1)\right]}{\M(z+1)}.
\]
Following the lines of the proof of Theorem~1.5 from \cite{Bosq}, we get 
\begin{multline}
\label{bosq}
\mathrm{E}\left[\left|\M_{n}(z)-\M(z)\right|^{2}\right] = \frac{1}{n^{2}}
\sum_{0\leq k,j\leq n-1}\mathrm{Cov}\left(X_{k}^{z-1},X_{j}^{z-1}\right)\\
=
\frac{1}{n} \Var\left(X_{0}^{z-1} \right) + 
\frac{2}{n}\sum_{k=1}^{n-1} \left( 1 - \frac{k}{n} \right) 
\mathrm{Cov}\left(X_{0}^{z-1},X_{k}^{z-1}\right).
\end{multline}
Note that the sum in the last representation converges as \(n \to \infty\), because by Davydov's inequality
\begin{eqnarray}
\label{davydov}
	\left| 
		\mathrm{Cov}\left(X_{0}^{z-1},X_{k}^{z-1}\right) 
	\right|
	\leq 
	\frac{2r}{r-2} 
	\left( 2 \alpha(k) \right)^{(r-2)/r}	
	\left(
		\E \left[ 
			X_{0}^{\left( u^{\circ}-1 \right) r}
		\right]
	\right)^{2/ r},
\end{eqnarray}
and therefore the series \(\sum \left(X_{0}^{z-1},X_{k}^{z-1}\right)\) is convergent if \(r=2 / \eps\).

We have $\mathrm{E}\left[\left|\M_{n}(u^{\circ}+\i v)-\M(u^{\circ}+\i v)\right|^{2}\right]\lesssim n^{-1}$
uniformly in $v\in\mathbb{R}.$ As a result
\[
\mathrm{E}\left[\left|R_{2,n}(u^{\circ}+\i v)\right|^{2}\right]\lesssim\frac{1}{n \cdot |\M(u^{\circ}+\i v+1)|^{2}}.
\]
The condition (\ref{eq:Vn_cond}) implies now that $\sup_{v\in[0,V_{n}]}\left|R_{2,n}(u^{\circ}+\i v)\right|^{2}=o_{P}(1).$
Furthermore, we have 
\begin{multline*}
\mathrm{Var}\left[\int_{0}^{\infty}R_{1,n}(u^{\circ}+\i v)w_{\mu,n}(v)\, dv\right] \\= \int_{0}^{\infty}\int_{0}^{\infty}\frac{\mathrm{Cov}(\M_{n}(u^{\circ}+\i v_{1}),\M_{n}(u^{\circ}+\i v_{2}))}{\M(u^{\circ}+\i v_{1}+1)\overline{\M(u^{\circ}+\i v_{2}+1)}}\\ \cdot (u^{\circ}+\i v_{1})(u^{\circ}-\i v_{2})w_{\mu,n}(v_{1})\, w_{\mu,n}(v_{2})\, dv_{1} dv_{2}.
\end{multline*}
Similar to \eqref{bosq},  we consider a representation
\begin{multline*}
\mathrm{Cov}(\M_{n}(u^{\circ}+\i v_{1}),\M_{n}(u^{\circ}+\i v_{2})) \\ 
=\frac{1}{n}\left[g_{0}(v_{1},v_{2})  +
 2\sum_{j=1}^{p}g_{j}(v_{1},v_{2})+2\sum_{j=p+1}^{n-1}g_{j}(v_{1},v_{2})\right],
\end{multline*}
where \(g_{j}(v_{1}, v_{2}) := (1 - j/n) \cdot \mathrm{Cov}\left(X_{0}^{u^{\circ}+\i v_{1}-1},X_{j}^{u^{\circ}+\i v_{2}-1}\right), \; j=0..(n-1)\). Applying once more Davydov's inequality, we get 
\begin{eqnarray}
\label{davydov}
	\left| 
	g_{j}(v_{1}, v_{2}) 
	\right|
	\leq 
	\frac{2r}{r-2} 
	\left( 2 \alpha(j) \right)^{(r-2)/r}	
	\left(
		\E \left[ 
			X_{0}^{\left( u^{\circ}-1 \right) r}
		\right]
	\right)^{2/ r},
\end{eqnarray}
 Now using the Cauchy-Schwarz inequality we get
\begin{eqnarray*}
\mathrm{Var}\left[\int_{0}^{\infty}R_{1,n}(u^{\circ}+\i v)w_{\mu,n}(v)\, dv\right] & \lesssim & p\int_{0}^{\infty}\frac{\left|u^{\circ}+\i v\right|^{2}\left|w_{\mu,n}(v)\right|^{2}}{\left|\M(u^{\circ}+\i v+1)\right|^{2}}\, dv\\
 &  &+\sum_{j=p+1}^{n}\alpha(j)\left[\int_{0}^{\infty}\frac{\left|u^{\circ}+\i v\right|\left|w_{\mu,n}(v)\right|}{\left|\M(u^{\circ}+\i v+1)\right|}\, dv\right]^{2}.
\end{eqnarray*}
 Finally using the fact
$\sup_{v\in[0,V_{n}]}\left|R_{2,n}(u^{\circ}+\i v)\right|^{2}=o_{P}(1),$
we derive
\begin{eqnarray*}
\mathrm{Var}\left[\int_{0}^{\infty}S_{n}(u^{\circ}+\i v)w_{\mu,n}(v)\, dv\right] & \lesssim & p\int_{0}^{\infty}\frac{\left|u^{\circ}+\i v\right|^{2}\left|w_{\mu,n}(v)\right|^{2}}{\left|\M(u^{\circ}+\i v+1)\right|^{2}}\, dv\\
 &  & +\sum_{j=p+1}^{n}\alpha(j)\left[\int_{0}^{\infty}\frac{\left|u^{\circ}+\i v\right|\left|w_{\mu,n}(v)\right|}{\left|\M(u^{\circ}+\i v+1)\right|}\, dv\right]^{2}.
\end{eqnarray*}

\textbf{3.} Turn now to the term $D_{n}.$ By the Plancherel's identity 
\begin{eqnarray*}
\left|\int_{0}^{\infty}w_{\mu,n}(v)\mathcal{F}[\bar{\nu}](-v)dv\right| & = & \left|\int_{0}^{\infty}\frac{w_{\mu,n}(v)}{(-\i v)^{s}}\left[(-\i v)^{s}\mathcal{F}[\bar{\nu}](-v)\right]dv\right|\\
 & = & \left|\int_{0}^{\infty}\frac{w_{\mu,n}(v)}{(-\i v)^{s}}\left[\mathcal{F}[\bar{\nu}^{(s)}](-v)\right]dv\right|\\
 & = & 2\pi\,\left|\int_{-\infty}^{\infty}\bar{\nu}^{(s)}(x)\overline{{\cal F}^{-1}[w_{\mu,n}(\cdot)/(-i\cdot)^{s}](x)}dx\right|\\
 & \leq & 2\pi\|\bar{\nu}^{(s)}\|_{\infty}\|\mathcal{F}^{-1}[w_{\mu,n}(\cdot)/(-i\cdot)^{s}]\|_{L^{1}}.
\end{eqnarray*}
\end{proof}
\textbf{\underline{Proof of Theorem~\ref{corub}}}
\begin{enumerate}[(i)]
\item
Suppose that $\pi\in\mathcal{P}(\beta,L)$ and $\alpha(j)\lesssim e^{-j\alpha^{*}},$
then by taking $p=c\log(n)$ for $c$ large enough, we arrive at
\begin{eqnarray*}
\mathrm{E}\left[\left|\mu_{n}-\mu\right|^{2}\right] & \lesssim & \frac{V_{n}^{-4}\log(n)}{n}\int_{0}^{V_{n}}|v|^{2\beta+2}\left|w_{\mu}(v/V_{n})\right|^{2}\, dv+V_{n}^{-2(s+2)}\\
&\lesssim& n^{-1}\log(n)V_{n}{}^{2\beta-1}+V_{n}^{-2(s+2)},\\
\mathrm{E}\left[\left|\lambda_{n}-\lambda\right|^{2}\right] & \lesssim & \frac{V_{n}^{-2}\log(n)}{n}\int_{0}^{V_{n}}|v|^{2\beta+2}\left|w_{\lambda}(v/V_{n})\right|^{2}\, dv+V_{n}^{-2(s+1)}\\&\lesssim& n^{-1}\log(n)V_{n}{}^{2\beta+1}+V_{n}^{-2(s+1)}
\end{eqnarray*}
By taking $V_{n}=n^{1/(2\beta+2s+3)},$ we get 
\[
\mathrm{E}\left[\left|\mu_{n}-\mu\right|^{2}\right]\lesssim n^{-2(s+2)/(2\beta+2s+3)}\log(n)
\]
and 
\[
\mathrm{E}\left[\left|\lambda_{n}-\lambda\right|^{2}\right]\lesssim n^{-2(s+1)/(2\beta+2s+3)}\log(n).
\]
\item 
Suppose that \textup{$\pi\in\mathcal{E}(\alpha,L),$ then by taking
$p=0,$ we get 
\begin{eqnarray*}
\mathrm{E}\left[\left|\mu_{n}-\mu\right|^{2}\right] & \lesssim & \frac{V_{n}^{-4}}{n}\left[\int_{0}^{V_{n}}\frac{\left|u^{\circ}+\i v\right|\left|w_{\mu}(v/V_{n})\right|}{\exp(-\alpha|v|)}\, dv\right]^{2}+V_{n}^{-2(s+2)}\\&\lesssim&\frac{1}{n}\exp(2\alpha V_{n})+V_{n}^{-2(s+2)},\\
\mathrm{E}\left[\left|\lambda_{n}-\lambda\right|^{2}\right] & \lesssim & \frac{V_{n}^{-2}}{n}\left[\int_{0}^{V_{n}}\frac{\left|u^{\circ}+\i v\right|\left|w_{\lambda}(v/V_{n})\right|}{\exp(-\alpha|v|)}\, dv\right]^{2}+V_{n}^{-2(s+1)}\\&\lesssim&\frac{V_{n}^{2}}{n}\exp(2\alpha V_{n})+V_{n}^{-2(s+1)}.
\end{eqnarray*}
Under the choice $V_{n}=\frac{1}{2\alpha}\log(n)-\frac{s+2}{\alpha}\log(\log(n)),$
one derives
\[
\mathrm{E}\left[\left|\mu_{n}-\mu\right|^{2}\right]\lesssim\log^{-2(s+2)}(n)
\]
and
\[
\mathrm{E}\left[\left|\lambda_{n}-\lambda\right|^{2}\right]\lesssim\log^{-2(s+1)}(n).
\]
}
\end{enumerate}
\subsection{Upper bounds for $MISE(\bar\nu_{n})$}
\label{uppmise}
\begin{prop}
 Let  the assumptions of the Proposition~\ref{propub} be fulfilled and let the kernel \(\K(\cdot)\) satisfy the assumption \eqref{assk}. Then the mean integrated squared error of the estimator \(\bar\nu_{n}(x)\) satisfies the following asymptotic relation
\begin{eqnarray*}
\textrm{MISE} (\bar\nu_{n}) & \lesssim& 
\frac{1}{n}
\int_{\R}
\frac{\left|u^{\circ}+\i v\right|^{2}\left[\K(v / V_{n})\right]^{2}}{\left|\M(u^{\circ}+\i v+1)\right|^{2}}\, dv\\
&& \hspace{1cm}
+C_{1 } V_{n}^{3} \cdot \E \left[ 
		\left( \mu_{n} - \mu \right)^{2}
	\right] 
	+
	C_{2} V_{n} \cdot \E \left[ 
		\left( \lambda_{n} - \lambda \right)^{2}
	\right] + C_{3} \frac{AL}{V_{n}^{2 s}}
\end{eqnarray*}
with some \(C_{1}, C_{2}, C_{3} > 0.\)
\end{prop}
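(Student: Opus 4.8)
The plan is to move to the Fourier domain by Plancherel's identity and then perform a bias--variance decomposition. Writing $G(v):=\F[\bar\nu](-v)$ and $G_{n}(v):=\hat\F[\bar\nu](-v)\,\K(-v/V_{n})$, the estimate \eqref{step6} and the target $\bar\nu$ are inverse Fourier transforms of $G_{n}$ and $G$ respectively, so that
\[
\textrm{MISE}(\bar\nu_{n})=\frac{1}{2\pi}\,\E\int_{\R}\bigl|G_{n}(v)-G(v)\bigr|^{2}\,dv .
\]
I would then split $G_{n}-G=\bigl(\hat\F[\bar\nu](-v)-\F[\bar\nu](-v)\bigr)\K(-v/V_{n})+\F[\bar\nu](-v)\bigl(\K(-v/V_{n})-1\bigr)$ and use $|a+b|^{2}\le 2|a|^{2}+2|b|^{2}$ to separate a stochastic term from a deterministic (bias) term.

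For the bias term $\int_{\R}|\F[\bar\nu](-v)|^{2}|\K(-v/V_{n})-1|^{2}\,dv$ I would invoke the kernel assumption \eqref{assk}, giving $|\K(-v/V_{n})-1|^{2}\le A^{2}V_{n}^{-2s}|v|^{2s}$, and then the moment bound from the class $\mathcal{G}(s,R)$ in \eqref{cond2}, namely $\int_{\R}|v|^{2s}|\F[\bar\nu](v)|^{2}\,dv\le R$. By symmetry of the integrand this yields a contribution of order $V_{n}^{-2s}$, which is the last summand $C_{3}AL/V_{n}^{2s}$ (the fixed constant $C_{3}$ absorbing the numerical factors).

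For the stochastic term I would use \eqref{step5} together with the representation \eqref{estproc}. Writing $Y(z):=\phi(z)$ and $S_{n}(z):=Y_{n}(z)-Y(z)$, these identities give
\[
\hat\F[\bar\nu](-v)-\F[\bar\nu](-v)=-S_{n}(\uc+\ii v)+(\mu_{n}-\mu)(\uc+\ii v)+(\lambda_{n}-\lambda),
\]
and I would bound the square of this by three contributions. The $\mu_{n}$ and $\lambda_{n}$ parts are immediate: since $\K$ is supported on $[-1,1]$, the function $\K(-\,\cdot/V_{n})$ is supported on $[-V_{n},V_{n}]$, on which $|\uc+\ii v|^{2}\lesssim V_{n}^{2}$, so integrating $|\uc+\ii v|^{2}|\K(-v/V_{n})|^{2}$ and $|\K(-v/V_{n})|^{2}$ over $\R$ produces the factors $V_{n}^{3}$ and $V_{n}$ in front of $\E[(\mu_{n}-\mu)^{2}]$ and $\E[(\lambda_{n}-\lambda)^{2}]$. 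The $S_{n}$ part is precisely the quantity already analysed in the proof of Proposition~\ref{propub}: reusing the identity $S_{n}(1+R_{2,n})=-Y R_{2,n}+R_{1,n}$ together with the uniform bound $\E|\M_{n}(\uc+\ii v)-\M(\uc+\ii v)|^{2}\lesssim n^{-1}$ obtained there via Davydov's inequality, one arrives at the pointwise estimate $\E|S_{n}(\uc+\ii v)|^{2}\lesssim |\uc+\ii v|^{2}/(n\,|\M(\uc+\ii v+1)|^{2})$. Integrating this against $|\K(-v/V_{n})|^{2}=|\K(v/V_{n})|^{2}$ (the kernel is symmetric) gives the first term of the asserted bound.

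The main obstacle is the pointwise control of $\E|S_{n}(\uc+\ii v)|^{2}$, because $S_{n}$ is a ratio with the random denominator $1+R_{2,n}(\uc+\ii v)$. As in Proposition~\ref{propub}, I would use condition \eqref{eq:Vn_cond} to ensure $\sup_{v\in[0,V_{n}]}|R_{2,n}(\uc+\ii v)|^{2}=o_{P}(1)$, so that on a high-probability event the denominator stays bounded away from zero and $|S_{n}|^{2}\lesssim |R_{1,n}|^{2}+|Y|^{2}|R_{2,n}|^{2}$; the linear growth $|Y(\uc+\ii v)|=|\phi(\uc+\ii v)|\lesssim|v|$ then renders both terms of order $|\uc+\ii v|^{2}/(n\,|\M(\uc+\ii v+1)|^{2})$, matching the leading stochastic term. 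Assembling the three stochastic contributions with the bias term produces the stated inequality.
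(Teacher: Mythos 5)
Your proposal is correct and follows essentially the same route as the paper: Plancherel's identity, the split $\hat\F[\bar\nu]\,\K(\cdot/V_{n})-\F[\bar\nu]=(\hat\F[\bar\nu]-\F[\bar\nu])\K(\cdot/V_{n})+(\K(\cdot/V_{n})-1)\F[\bar\nu]$, the factors $V_{n}^{3}$ and $V_{n}$ from integrating $|\uc+\ii v|^{2}\K^{2}$ and $\K^{2}$ over the kernel's support, the bias bound via \eqref{assk} and \eqref{cond2}, and the reduction of the $S_{n}$-term to $R_{1,n}$ using the machinery of Proposition~\ref{propub}. If anything, you are slightly more explicit than the paper at the one delicate point, where the paper simply asserts $Y_{n}-Y\asymp R_{1,n}$ while you justify it via the identity $S_{n}(1+R_{2,n})=-YR_{2,n}+R_{1,n}$, condition \eqref{eq:Vn_cond}, and the linear growth of $\phi(\uc+\ii v)$.
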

\begin{proof}
Recall that 
\begin{eqnarray*}
	 \bar\nu_{n}(x)   &=&  \frac{1}{2 \pi} \int_{\R} e^{\i v x } \hat\F[\bar\nu] (-v) \K(-v / V_{n})   dv = \F^{-1}[\hat\F_{\bar\nu} (\cdot) \K(\cdot / V_{n})](x),
\end{eqnarray*}
and 
\begin{eqnarray*}
\hat\F[\bar\nu](-v) &=&   - Y_{n}(\uc+\ii v) + \c_{n}\cdot (\uc+\ii v)   +  \a_{n},\\
	\F[\bar\nu](-v)  &=&  - Y(\uc+\ii v) + \c\cdot (\uc+\ii v)   +  \a.
\end{eqnarray*}
By the Parsenval's identity, 
\begin{eqnarray*}
\textrm{MISE}  &=& \frac{1}{2 \pi} \E \left[ 
	\int_{\R} \left|
		\F[\bar{\nu}_{n}] (v)
		- 
		\F[\bar\nu] (v)
	\right|^{2}  dv
\right] \\
&=&
\frac{1}{2 \pi} \E \left[ 
	\int_{\R} \left|
		\hat\F[\bar\nu] (v) \K(v / V_{n})
		- 
		\F[\bar\nu] (v)
	\right|^{2}  dv
\right] 
\\ 
&=&
\frac{1}{2 \pi} \E \left[ 
	\int_{\R} \left|
		\left( 
			\hat\F[\bar\nu] (v) 
				- 
			\F[\bar\nu] (v)
		\right)
		\K(v / V_{n})
		+ 
		\left( 
			\K(v / V_{n}) -1 
		\right) 
		\F[\bar\nu] (v)
	\right|^{2}  dv
\right] \\
& \leq & 
\frac{1}{\pi} \E \left[ 
	\int_{\R} \left|
		\left( 
			\hat\F[\bar\nu] (v) 
				- 
			\F[\bar\nu] (v)
		\right)
		\K(v / V_{n})
			\right|^{2}  dv
\right]\\
&& 	\hspace{4cm}	+ 
\frac{1}{\pi} \E \left[ 
	\int_{\R} \left|
		\left( 
			\K(v / V_{n}) -1 
		\right) 
		\F[\bar\nu] (v)
	\right|^{2}  dv
\right]\\
& \leq & 
\frac{3}{\pi} \left( J_{1} +J_{2} +J_{3} \right) +\frac{1}{\pi}J_{4},
\end{eqnarray*}
where 
\begin{eqnarray*}
J_{1} &:=& \E \left[ 
	\int_{\R} \left|
			Y_{n}(\uc+\ii v) 	- 
			Y(\uc+\ii v) 
		\right|^{2}		
		\left [ 
			\K(v / V_{n})
		\right]^{2}  dv
\right], \\
J_{2}&:=& 
	A_{n} \cdot
	\E \left[ 
		\left( \mu_{n} - \mu \right)^{2}
	\right] \qquad \mbox{with} \;\; A_{n}:= \int_{\R} 
	|\uc+\ii v|^{2} \cdot \left[  \K(v / V_{n}) \right]^{2}
	dv,\\
J_{3}&:=& 	
	B_{n} \cdot
	\E \left[ 
		\left( \lambda_{n} - \lambda \right)^{2}
	\right] \qquad \mbox{with} \;\; 
	B_{n}:=\int_{\R} 
		 \left[  \K(v / V_{n}) \right]^{2}
	dv,\\
J_{4}&:=&
	\int_{\R} \left|
		\left( 
			\K(v / V_{n}) -1 
		\right) 
		\F[\bar\nu] (v)
	\right|^{2}  dv.
\end{eqnarray*}
The treatment of \(J_{1}\) is based on the observation that 
\begin{eqnarray*}
	Y_{n}(z) 	- 	Y(z)  \asymp R_{1,n} =z\,\frac{\left[\M_{n}(z)-\M(z)\right]}{\M(z+1)}.
\end{eqnarray*}
We get that
\begin{eqnarray*}
J_{1} \asymp
\int_{\R}\E \left[ 
	\left|\M_{n}(u^{\circ}+\i v)-\M(u^{\circ}+\i v)\right|^{2}
\right] 
\frac{\left|u^{\circ}+\i v\right|^{2}\left[\K(v / V_{n})\right]^{2}}{\left|\M(u^{\circ}+\i v+1)\right|^{2}}\, dv.
\end{eqnarray*}
As it was shown before,
\(
\E \left[ 
	\left|\M_{n}(u^{\circ}+\i v)-\M(u^{\circ}+\i v)\right|^{2}
\right] \lesssim n^{-1},
\)
see \eqref{bosq}-\eqref{davydov}. Therefore,  
\begin{eqnarray*}
J_{1} \lesssim
\frac{1}{n} \cdot
\int_{\R}
\frac{\left|u^{\circ}+\i v\right|^{2}\left[\K(v / V_{n})\right]^{2}}{\left|\M(u^{\circ}+\i v+1)\right|^{2}}\, dv.
\end{eqnarray*}
To complete the proof, it is sufficient to note that 
\begin{eqnarray*}
A_{n} \asymp V_{n}^{3} \cdot \int_{\R} y^{2} \left[ \K(y) \right]^{2} dy,  \qquad 
B_{n} = V_{n} \cdot \int_{\R} \left[ \K(y) \right]^{2} dy,
\end{eqnarray*}
and
\begin{eqnarray*}
J_{4} \leq 
A 
	\int_{\R} \left|
		\frac{v}{V_{n}}
	\right|^{2 s}
	\left|
		\F[\bar\nu] (v)
	\right|^{2}  dv
	\leq \frac{A L}{V_{n}^{2 s}}.
\end{eqnarray*}
\end{proof}
\textbf{\underline{Proof of Theorem~\ref{nuub}}}
\begin{enumerate}[(i)]
\item
Recall that if \(\pi \in \mathcal{P}(\beta,L), \) then 
\begin{eqnarray*}
\mathrm{E}\left[\left|\mu_{n}-\mu\right|^{2}\right] 
&\lesssim& n^{-1}\log(n)V_{n}{}^{2\beta-1}+V_{n}^{-2(s+2)},\\
\mathrm{E}\left[\left|\lambda_{n}-\lambda\right|^{2}\right] &\lesssim& n^{-1}\log(n)V_{n}^{2\beta+1}+V_{n}^{-2(s+1)},
\end{eqnarray*}
see the proof of Theorem~\ref{corub}. Taking into account that 
\(
J_{1} \lesssim
n^{-1} V_{n}^{2 \beta +3},
\)
we arrive at 
\begin{eqnarray*}
\textrm{MISE} (\bar\nu_{n}) \lesssim n^{-1} V_{n}^{3+2 \beta} +n^{-1}\log(n)V_{n}^{2\beta+1}+V_{n}^{-2(s+1)} + V_{n}^{-2s}.
\end{eqnarray*}
Choosing $V_{n}=n^{1/(2\beta+2s+3)},$ we get 
\begin{eqnarray*}
	n^{-1} V_{n}^{3+2 \beta}  = V_{n}^{-2s} \gtrsim V_{n}^{-2(s+1)}, 
\end{eqnarray*}
and therefore
\begin{eqnarray*}
\textrm{MISE} (\bar\nu_{n}) \lesssim n^{-1}\log(n)V_{n}^{2\beta+1}+ V_{n}^{-2s} \lesssim n^{-2s/(2\beta+2s+3)}.
\end{eqnarray*}
\item Similarly, we derive the upper bound for the class \(\mathcal{E}(\alpha,L).\) 
Recall that 
\begin{eqnarray*}
\mathrm{E}\left[\left|\mu_{n}-\mu\right|^{2}\right] &\lesssim&n^{-1}\exp(2\alpha V_{n})+V_{n}^{-2(s+2)},\\
\mathrm{E}\left[\left|\lambda_{n}-\lambda\right|^{2}\right] &\lesssim&n^{-1} V_{n}^{2} \exp(2\alpha V_{n})+V_{n}^{-2(s+1)}
\end{eqnarray*}
and therefore
\begin{eqnarray*}
\textrm{MISE} (\bar\nu_{n}) \lesssim n^{-1}\log(n)V_{n}^{3} e^{2 \alpha V_{n}}+ V_{n}^{-2s} \lesssim \left( \log n \right)^{- 2 s}.
\end{eqnarray*}
\end{enumerate}
\subsection{Lower bounds for $MISE$}
\label{prooflb}
\textbf{\underline{Proof of Theorem~\ref{thm3}}.}  The general idea of the proof is to apply Theorem~2.7 from \cite{Tsyb}. This theorem yields  that \eqref{lowerbound} holds, if there exists a parameterized set of L{\'e}vy triplets 
\[
\mathcal{T}_{\theta}=(1,0,\nu_{\theta}) \subset \mathcal{G}(s,R),\quad\theta\in\{0,1\}^{L}
\]  for some \(s\in \N\cup 0, R>0\), \(L>0\) and a  set of parameters $\{\theta^{(j)}, \; j=0,\ldots, M\}$ such that the following two properties hold.
\begin{enumerate}[(i)]
\item For any \(0 \leq j < k \leq M\), 
\begin{eqnarray}
\label{1}
 	\int_{\mathbb{R}}\left|\nu_{\theta^{(j)}}(x)-\nu_{\theta^{(k)}}(x)\right| ^{2} dx\geq 2\varphi_{n}.
\end{eqnarray}
\item Denote by  \(\pi_{\theta_{j}}, j=0,\ldots, M,\) the probability distribution of the exponential L{\'e}vy model $A_{j,\infty}=\int_{0}^{\infty}e^{-\xi_{j,s}}ds,$ where  $\xi_{j,s}$ is a L{\'e}vy subordinator with  triplet $\T_{\theta_{j}}.$ Then 
\begin{eqnarray}
\label{2}
	\frac{n}{M} \sum_{j=1}^{M} K \left(\pi_{\theta^{(j)}}, \pi_{\theta^{(0)}}\right) \leq \varkappa\log(M),
\end{eqnarray} 
for \(n\) large enough, where  \(K\) stands for the Kullback-Leibler divergence between models, and \(\varkappa\in (0,1/8)\).
\end{enumerate}

Below we present a detailed proof for the polynomial case.

\textbf{1. Presentation of the models.} Consider an exponential L{\'e}vy model $A_{0,\infty}=\int_{0}^{\infty}e^{-\xi_{0,s}}ds,$
where $\xi_{0,s}$ is a L{\'e}vy subordinator with a triplet $(1,0,\nu_{0})$
and $\nu_{0}(x)=abe^{-bx}$ for some $0<a\leq1,$ $0<b<1.$ It is clear that \((1,0,\nu_{0})\in \mathcal{G}(0,R)\) for some \(R>0\) and the Laplace
exponent of $\xi_{0,s}$ is given by
\begin{eqnarray*}
\phi_{0}(z) = z+\int_{0}^{\infty}(1-e^{-xz})\,\nu_{0}(x)\, dx= z\left[1+\frac{a}{z+b}\right],\qquad\Re(z)>-b,
\end{eqnarray*}
see Example~1 from Section~\ref{secsim}. For the case of general classes \(\mathcal{G}(s,R)\) with \(s>0,\) we could take a L\'evy density of the form \(\nu_0(x)=b^{1+s}x^se^{-bx} / \Gamma(s+1).\) 

Fix some \(L>0\) and let us construct now a parameterized set of L{\'e}vy triplets 
\(
\mathcal{T}_{\theta}=(1,0,\nu_{\theta}),\;\theta\in\{0,1\}^{L},
\)
with L{\'e}vy measure \(\nu_{\theta}\) defined by 
\begin{eqnarray*}
\nu_{\theta}(x):=\nu_{0}(x)+\delta\cdot\Delta_{\theta}(x), 
\end{eqnarray*}
where 
$\delta>0$ small enough,
\begin{eqnarray*}
\Delta_{\theta}(x)&:=&\bigl(g_{\theta}(x)+a(g_{\theta}\star\exp(-b\cdot)(x))\bigr)' ,\\
g_{\theta}(x) &:=&\sum_{k=L+1}^{2L} \theta_{k-L}\cos(k\gamma_{L}x)g_{0}(x),
\end{eqnarray*}
\(\theta_{k-L}\) stands for the \((k-L)\)-th component of the vector \(\theta\), $\gamma_{L}\to\infty$ as $L\to\infty$, and 
\[
g_{0}(x):= x^{-3/2}\exp(-1/x),\quad x>0.
\]

\textbf{2. Distributional properties of the models.}
In this step, we perform some technical calculations, which will be used later. It  holds
\begin{multline*}
\mathcal{L}[\Delta_{\theta}](z) = 
\int_{0}^{\infty}e^{-zx}\Delta_{\theta}(x)\, dx \\= z\left[1+\frac{a}{z+b}\right]\left[\int_{0}^{\infty}e^{-zx}g_{\theta}(x)\, dx\right]
= \phi_{0}(z)\cdot\mathcal{L}[g_{\theta}](z),
\end{multline*}
where \(\mathcal{L}[g_{\theta}](z)\) is the Laplace transform of the function \(g_{0}(\cdot)\), which is equal to
\[
\mathcal{L}[g_{\theta}](z)=\frac{1}{2}\sum_{k=L+1}^{2L}\theta_{k-L}\left[\mathcal{L}[g_{0}](z+\i \gamma_{L}k)+\mathcal{L}[g_{0}](z-\i \gamma_{L}k)\right].
\]
We see that $\int_{0}^{\infty}\Delta_{\theta}(x)\, dx=0$ and 
\[
\phi_{\theta}(z)-\phi_{0}(z)=\delta\phi_{0}(z)\mathcal{L}[g_{\theta}](z),
\]
where \(\phi_{\theta}(\cdot)\) is the  Laplace exponent of a L{\'e}vy process \(\xi_{\theta,s}\) with the L{\'e}vy triplet \(\T_{\theta}\).  Furthermore, the Laplace transform of $g_{0}$ is given by
\[
\mathcal{L}[g_{0}](u+\i v)=\sqrt{\pi}e^{-2(z_{+}+iz_{-})}
\]
with $2z_{\pm}^{2}=\sqrt{u^{2}+v^{2}}\pm u.$ 
The Mellin transform of the density $\pi_{\theta}$ corresponding
to the L{\'e}vy model $\mathcal{T}_{\theta}$ satisfies the following
functional equation 
\[
\frac{\M_{\theta}(z)}{\M_{0}(z)}=\frac{\phi_{\theta}(z)}{\phi_{0}(z)}\frac{\M_{\theta}(z+1)}{\M_{0}(z+1)}.
\]
Since
\[
\frac{\phi_{\theta}(z)}{\phi_{0}(z)}-1=\delta\mathcal{L}[g_{\theta}](z),
\]
and 
\begin{eqnarray*}
\sum_{k=1}^{\infty}\left|\mathcal{L}[g_{\theta}](z+k)\right| & \leq & C\exp\left(-\sqrt{2\mathrm{Re}(z)}-\sqrt{2\left|\mathrm{Im}(z)\right|}\right)\\
&&\hspace{3cm}\cdot\sum_{k=1}^{\infty}\sum_{j=L+1}^{2L}\exp(-\sqrt{2\gamma_{L}j}-\sqrt{2k})\\
 & \leq & C'\exp\left(-\sqrt{2\mathrm{Re}(z)}-\sqrt{2\left|\mathrm{Im}(z)\right|}\right),\quad\mathrm{Re}(z)\geq0,
\end{eqnarray*}
we derive the following infinite product representation for the ratio $\M_{\theta}(z)/\M_{0}(z)$
\[
\frac{\M_{\theta}(z)}{\M_{0}(z)}=\prod_{k=0}^{\infty}(1+\delta\mathcal{L}[g_{\theta}](z+k)).
\]
Furthermore, it can be proved that 
\[
\left|\frac{\M_{\theta}(u+\i v)}{\M_{0}(u+\i v)}-1\right|\leq c\delta\left|\mathcal{L}[g_{\theta}](u+\i v)\right|
\]
for some absolute constant $c>0.$ Note that the random variables
$A_{\theta,\infty}=\int_{0}^{\infty}e^{-\xi_{\theta,s}}ds$ with $\xi_{\theta,s}$
being a L{\'e}vy process with the triplet $\mathcal{T}_{\theta},$ satisfies
$0<A_{\theta,\infty}<1$ a.s. Moreover the density $p_{0}$ of the
r.v. $A_{0,\infty}$ has the form 
\[
\pi_{0}(x)=\frac{1}{B(b-1,a)}x^{b}(1-x)^{a-1}1_{\{0<x<1\}}
\]
 and the Mellin transform $\M_{0}(z)$ of $A_{0,\infty}$ is given
by 
\begin{equation}
\M_{0}(z)=\frac{B(z+b,a)}{B(b-1,a)},\label{eq:mellin_0}
\end{equation}
see Example~\ref{exx}.

\textbf{3. Class \(\mathcal{G}(s,R)\).} In this step, we check that constructed  models \(\mathcal{T}_{\theta^{(j)}},\) \(j=1,\ldots, M\) belong to class \(\mathcal{G}(s,R)\) with \(s=0\) and some \(R>0\). We have for any $\theta\in\{0,1\}^{L},$ 
\begin{eqnarray*}
\int_{\mathbb{R}}\left|v\right|^{2s}\left|\mathcal{F}\left[\nu_{\theta}\right](v)\right|^{2}\, dx & \leq & \int_{\mathbb{R}}\left|v\right|^{2s}\left|\mathcal{F}\left[\nu_{0}\right](v)\right|^{2}\, dv\\
&&\hspace{2cm}+\int_{\mathbb{R}}\left|v\right|^{2s}\left|\mathcal{F}\left[\nu_{\theta}\right](v)-\mathcal{F}\left[\nu_{0}\right](v)\right|^{2}\, dv\\
 & \leq & \int_{\mathbb{R}}\left|v\right|^{2s}\left|\mathcal{F}\left[\nu_{0}\right](v)\right|^{2}\, dv+\delta^{2}\int_{\mathbb{R}}\left|v\right|^{2s}\left|\mathcal{F}[\Delta_{\theta}](v)\right|^{2}\, dv.\\
\end{eqnarray*}
The inequality $\left|\phi_{0}(-\i v)\right|\leq c\cdot |v|$ for $v\in\mathbb{R}$, where \(c= 1 + a/b,\) implies
\begin{eqnarray*}
\int_{\mathbb{R}}\left|v\right|^{2s}\left|\mathcal{F}[\Delta_{\theta}](v)\right|^{2}\, dv & \leq & c \int_{\mathbb{R}}|v|^{2(s+1)}\left|\mathcal{L}[g_{\theta}](-\i v)\right|^{2}\, dv\\
 & = & \frac{c}{2}\int_{\mathbb{R}}|v|^{2(s+1)}\cdot \left|\sum_{k=L+1}^{2L}\theta_{k-L}\left(\mathcal{L}[g_{0}](-\i v+\i \gamma_{L}k) \right.\right. \\
&& \hspace{3cm} \left. \left.
 +\mathcal{L}[g_{0}](-\i v-\i \gamma_{L}k)\right)\right|^{2}\, dv\\
 & \leq & \frac{c}{2}\sum_{k=L+1}^{2L}\int_{\mathbb{R}}|v|^{2(s+1)}\left|\mathcal{L}[g_{0}](-\i v+\i \gamma_{L}k)\right|^{2}\, dv\\
 &  & \hspace{0.2cm}+\frac{c}{2}\sum_{k=L+1}^{2L}\int_{\mathbb{R}}|v|^{2(s+1)}\left|\mathcal{L}[g_{0}](-\i v-\i \gamma_{L}k)\right|^{2}\, dv\\&&\hspace{6cm}+R_{L},\\
\end{eqnarray*}
where 
\begin{eqnarray*}
R_{L} & = & 2\sum_{k\neq j}\int_{\mathbb{R}}|v|^{4}\mathcal{L}[g_{0}](-\i v-\i j\gamma_{L})\overline{\mathcal{L}[g_{0}](-\i v-\i k\gamma_{L})}\, dv\\
 &  & +2\sum_{k\neq j}\int_{\mathbb{R}}|v|^{4}\mathcal{L}[g_{0}](-\i v+\i j\gamma_{L})\overline{\mathcal{L}[g_{0}](-\i v+\i k\gamma_{L})}\, dv
\end{eqnarray*}
It holds 
\begin{eqnarray*}
\left|R_{L}\right| & \leq & CL\sum_{j=1}^{2L}\left(j\gamma_{L}\right){}^{2(s+1)}\exp(-\sqrt{2\gamma_{L}j})\\
 & \leq & CL^{2(s+1)+2}\gamma_{L}{}^{2(s+1)}\exp(-\sqrt{2\gamma_{L}})\\
 & = & o\left(L^{2(s+1)+1}\right),
\end{eqnarray*}
provided $\gamma_{L}=c\log^{2}(L)$ for large enough $c>0.$ Hence
$\int_{\mathbb{R}}\left|v\right|^{2s}\left|\mathcal{F}[\Delta_{\theta}](v)\right|^{2}\, dv$
is bounded if $\delta^{2}\gamma_{L}^{2(s+1)}L^{2s+3}=O(1).$

\textbf{4. Upper bound for the $L^{2}$-distance between elements of \(\{\nu_{\theta}\}\).}

 Fix two
vectors $\theta,\theta'\in\{0,1\}^{L}.$ We have
\begin{eqnarray*}
\int_{\mathbb{R}}\left|\nu_{\theta}(x)-\nu_{\theta'}(x)\right|^{2}\, dx & = & \frac{1}{2\pi}\delta^{2}\int_{\mathbb{R}}\left|\phi_{0}(-\i v)\mathcal{L}[g_{\theta}-g_{\theta'}](-\i v)\right|^{2}\, dv\\
 & = & \frac{1}{2\pi}\delta^{2}\sum_{k=L+1}^{2L}\left(\theta_{k-L}-\theta'_{k-L}\right)^{2}\\
 && \hspace{2cm} \cdot\int_{\mathbb{R}}\left|\phi_{0}(-\i v)\mathcal{L}[g_{0}](-\i v+\i \gamma_{L}k)\right|^{2}\, dv\\
 &  & +\frac{1}{2\pi}\delta^{2}\sum_{k=L+1}^{2L}\left(\theta_{k-L}-\theta'_{k-L}\right)^{2}\\
  && \hspace{2cm} \cdot\int_{\mathbb{R}}\left|\phi_{0}(-\i v)\mathcal{L}[g_{0}](-\i v-\i \gamma_{L}k)\right|^{2}\, dv\\
 &  & +\frac{1}{2\pi}\delta^{2}R_{L},
\end{eqnarray*}
where
\begin{eqnarray*}
R_{L} & \leq & 2\sum_{k\neq j}\int_{\mathbb{R}}\left|\phi_{0}(-\i v)\right|^{2}\mathcal{L}[g_{0}](-\i v-\i j\gamma_{L})\overline{\mathcal{L}[g_{0}](-\i v-\i k\gamma_{L})}\, dv\\
 &  & +2\sum_{k\neq j}\int_{\mathbb{R}}\left|\phi_{0}(-\i v)\right|^{2}\mathcal{L}[g_{0}](-\i v+\i j\gamma_{L})\overline{\mathcal{L}[g_{0}](-\i v+\i k\gamma_{L})}\, dv.
\end{eqnarray*}
Consider, for example, 
\begin{multline*}
\int_{\mathbb{R}}\left|\phi_{0}(-\i v)\mathcal{L}[g_{0}](-\i v+\i \gamma_{L}k)\right|^{2}\, dv = \int_{\mathbb{R}}\left|\phi_{0}(-i(v+\gamma_{L}k))\mathcal{L}[g_{0}](-\i v)\right|^{2}\, dv
\\= \int_{\mathbb{R}}\left|v+\gamma_{L}k\right|^{2}\left|1+\frac{a}{b-i(v+\gamma_{L}k)}\right|^{2} e^{-2\sqrt{2|v|}}\, dv\\=
 \gamma_{L}^{2}k^{2}\int_{\mathbb{R}}\left|1+\frac{a}{b-i(v+\gamma_{L}k)}\right|^{2}e^{-2\sqrt{2|v|}}\, dv +O(\gamma_{L}k).
\end{multline*}
So we have 
\begin{multline*}
\sum_{k=L+1}^{2L}\left(\theta_{k-L}-\theta'_{k-L}\right)^{2}\int_{\mathbb{R}}\left|\phi_{0}(-\i v)\mathcal{L}[g_{0}](-\i v+\i \gamma_{L}k)\right|^{2}\, dv \\= C\gamma_{L}^{2}\sum_{k=L+1}^{2L}\left(\theta_{k-L}-\theta'_{k-L}\right)^{2}k^{2}
+o\left(\gamma_{L}^{2}\sum_{k=L+1}^{2L}\left(\theta_{k-L}-\theta'_{k-L}\right)^{2}k^{2}\right)\\
\geq  C'\gamma_{L}^{2}L^{2}\sum_{k=1}^{L}I(\theta_{k}\neq\theta'_{k}),
\end{multline*}
as $L\to\infty$ and $\rho(\theta,\theta')=\sum_{k=1}^{L}I(\theta_{k}\neq\theta'_{k})>0.$
Analogously, 
\begin{eqnarray*}
\sum_{k=L+1}^{2L}\left(\theta_{k-L}-\theta'_{k-L}\right)^{2}\int_{\mathbb{R}}\left|\phi_{0}(-\i v)\mathcal{L}[g_{0}](-\i v-\i \gamma_{L}k)\right|^{2}\, dv & = & C''\gamma_{L}^{2}L^{2}\rho(\theta,\theta').\end{eqnarray*}
Furtheremore, one shows (see above) that 
\[
\left|R_{L}\right|=o\left(L^{3}\right).
\]

\textbf{5. Choice of  $\theta^{(0)},\ldots,\theta^{(M)}$.}

Our choice is based on the well-known Varshamov-Gilbert bound (see \cite{Tsyb}, Lemma~2.9), which implies that there are $M>2^{L/8}$
vectors $\theta^{(0)},\ldots,\theta^{(M)}\in\{0,1\}^{L}$ such that
\[
\rho(\theta^{(j)},\theta^{(k)})\geq L/8.
\]

\textbf{6. Upper bound for $K(\pi_{0},\pi_{\theta})$.}

By Parseval identity for Mellin transforms, we get 
\begin{eqnarray*}
K(\pi_{0},\pi_{\theta}) & = & \int_{0}^{1}\frac{\left|\pi_{\theta}(x)-\pi_{0}(x)\right|^{2}}{\pi_{0}(x)}\, dx\\
 & = & \int_{0}^{1}x^{-b}(1-x)^{1-a}\left|\pi_{\theta}(x)-\pi_{0}(x)\right|^{2}\, dx\\
 & \leq & \int_{0}^{1}x^{-b}\left|\pi_{\theta}(x)-\pi_{0}(x)\right|^{2}\, dx\\
 & = & \frac{1}{2\pi}\int_{-\infty}^{\infty}\left|\M_{\theta}((1-b)/2+\i v)-\M_{0}((1-b)/2+\i v)\right|^{2}\, dv\\
 & \leq & \frac{c\delta^{2}}{2\pi}\int_{-\infty}^{\infty}\left|\M_{0}((1-b)/2+\i v)\right|^{2}\left|\mathcal{L}[g_{\theta}](u+\i v)\right|^{2}\, dv.
\end{eqnarray*}
So we get
\begin{eqnarray*}
K(\pi_{0},\pi_{\theta}) & \leq & \frac{c\delta^{2}}{2\pi}\sum_{k=L+1}^{2L}\int_{\mathbb{R}}\left|\M_{0}((1-b)/2+\i v)\right|^{2}\left|\mathcal{L}[g_{0}](-\i v+\i \gamma_{L}k)\right|^{2}\, dv,\\
 &  & +\frac{c\delta^{2}}{2\pi}\sum_{k=L+1}^{2L}\int_{\mathbb{R}}\left|\M_{0}((1-b)/2+\i v)\right|^{2}\left|\mathcal{L}[g_{0}](-\i v-\i \gamma_{L}k)\right|^{2}\, dv\\
 &  & +\frac{c\delta^{2}}{2\pi}R_{L},
\end{eqnarray*}
where
\begin{eqnarray*}
R_{L} & \leq & 2\sum_{k\neq j}\int_{\mathbb{R}}\left|\M_{0}((1-b)/2+\i v)\right|^{2}\mathcal{L}[g_{0}](-\i v-\i j\gamma_{L})\overline{\mathcal{L}[g_{0}](-\i v-\i k\gamma_{L})}\, dv\\
 &  & +2\sum_{k\neq j}\int_{\mathbb{R}}\left|\M_{0}((1-b)/2+\i v)\right|^{2}\mathcal{L}[g_{0}](-\i v+\i j\gamma_{L})\overline{\mathcal{L}[g_{0}](-\i v+\i k\gamma_{L})}\, dv.
\end{eqnarray*}
The equation (\ref{eq:mellin_0}) implies that $\M_{0}(z)$ is finite
for all $z$ with $\mathrm{Re}(z)\geq0$ and
\begin{eqnarray*}
\M_{0}(u+\i v) & = & C(a,b)\frac{\Gamma(u+\i v+b)}{\Gamma(u+\i v+b+a)}\\
 & \asymp & C(a,b)e^{-a\log(u+\i v+b)}\\
 & = & C(a,b)\left((u+b)^{2}+v^{2}\right)^{-a/2}e^{i\mathrm{Arg}(u+\i v+b)},\quad u^{2}+v^{2}\to\infty.
\end{eqnarray*}
Hence 
\[
\left|\M_{0}(u+\i v)\right|\asymp C(a,b)\left((u+b)^{2}+v^{2}\right)^{-a/2},\quad u^{2}+v^{2}\to\infty
\]
and the density $\pi_{0}$ of $A_{0,\infty}$ belongs to the class
$\mathcal{P}(a,C(a,b))$ (see also Example~\ref{exx}). We have
\[
\sum_{k=L+1}^{2L}\int_{\mathbb{R}}\left|\M_{0}((1-b)/2+\i v)\right|^{2}\left|\mathcal{L}[g_{0}](-\i v+\i \gamma_{L}k)\right|^{2}\, dv=O(L^{-2a+1})
\]
and
\[
\sum_{k=L+1}^{2L}\int_{\mathbb{R}}\left|\M_{0}((1-b)/2+\i v)\right|^{2}\left|\mathcal{L}[g_{0}](-\i v-\i \gamma_{L}k)\right|^{2}\, dv=O(L^{-2a+1}).
\]
Hence
\begin{eqnarray}
\label{3}
\frac{n}{M}\sum_{m=1}^{M} K(\pi_{0},\pi_{\theta^{(m)}})\leq n\delta^{2}L^{-2a}\log(M),\quad L\to\infty
\end{eqnarray}
for large \(n\).

\textbf{7. Choice of $L$.} To complete the proof, we choose \(L\) such that the conditions \eqref{1} and \eqref{2} are fulfilled. First note that since our model belongs to the class \(\mathcal{G}(s,R)\), we can take 
$\gamma_{L}=c\log^{2}(L)$ and  $\delta^{2}= \gamma_{L}^{-2(s+1)}L^{-2s-3} \cdot O(1),$ see Step~3 of the proof for details.  Second, comparing \eqref{3} with \eqref{2}, we fix 
\(\varkappa = 
 n\delta^{2}L^{-2a}\). This leads to the choice of \(L\) as the solution of the equation 
\[L^{2a+2s+3} 
\log^{4(s+1)}(L) = n O(1)
\]
Combination of  the results from Steps~4 and 5 yields the condition \eqref{1}, because
\begin{eqnarray*}
\int_{\mathbb{R}}\left|\nu_{\theta}(x)-\nu_{\theta'}(x)\right|^{2} dx &\geq& C_{1} \delta^{2}\gamma_{L}^{2} L^{3}\\
&=& C_{2}(\log L)^{-4s} L^{-2s } \\&=&  C_{3}\left( \log L \right)^{4s \frac{-2a-1}{2a+2s+3}}
n^{-2s / (2a+2s+3)}
\end{eqnarray*}
for some \(C_{1}, C_{2}, C_{3}>0\) and \(L\) large enough.   This observation completes the proof.
\bibliographystyle{plain}
\bibliography{Panov_bibliography}

\end{document}